\DeclareMathOperator*{\argmax}{arg\,max}
\newtheorem{theorem}{Theorem}[section]
\newtheorem*{theorem4}{Theorem 3.4}
\newtheorem*{theorem5}{Theorem 3.5}
\newtheorem*{theorem6}{Theorem 3.6}
\newtheorem{lemma}[theorem]{Lemma}
\newtheorem*{lemma1}{Lemma 3.1}
\theoremstyle{definition}
\newtheorem{definition}{Definition}[section]
\newtheorem{exmp}{Example}
\newtheorem{remark}{Remark}
\newtheorem*{remark1}{Remark 1}
\title[Bridging the Gap Between Single and Multi Objective Games]{Bridging the Gap Between Single and Multi Objective Games}
\author{Willem Röpke}
\affiliation{
  \department{Artificial Intelligence Lab}
  \institution{Vrije Universiteit Brussel}
  \country{Belgium}}
\email{willem.ropke@vub.be}
\author{Carla Groenland}
\affiliation{
  \department{Department of Mathematics}
  \institution{Universiteit Utrecht}
  \country{The Netherlands}}
\email{c.e.groenland@uu.nl}
\author{Roxana R\u{a}dulescu}
\affiliation{
  \department{Artificial Intelligence Lab}
  \institution{Vrije Universiteit Brussel}
  \country{Belgium}}
\email{roxana.radulescu@vub.be}
\author{Ann Now\'{e}}
\affiliation{
  \department{Artificial Intelligence Lab}
  \institution{Vrije Universiteit Brussel}
  \country{Belgium}}
\email{ann.nowe@vub.be}
\author{Diederik M. Roijers}
\affiliation{
  \department{Artificial Intelligence Lab}
  \institution{Vrije Universiteit Brussel}
  \country{Belgium}}
\email{diederik.roijers@vub.be}
\begin{abstract}
A classic model to study strategic decision making in multi-agent systems is the normal-form game. This model can be generalised to allow for an infinite number of pure strategies leading to continuous games. Multi-objective normal-form games are another generalisation that model settings where players receive separate payoffs in more than one objective. We bridge the gap between the two models by providing a theoretical guarantee that a game from one setting can always be transformed to a game in the other. We extend the theoretical results to include guaranteed equivalence of Nash equilibria. The mapping makes it possible to apply algorithms from one field to the other. We demonstrate this by introducing a fictitious play algorithm for multi-objective games and subsequently applying it to two well-known continuous games. We believe the equivalence relation will lend itself to new insights by translating the theoretical guarantees from one formalism to another. Moreover, it may lead to new computational approaches for continuous games when a problem is more naturally solved in the succinct format of multi-objective games.
\end{abstract}
\keywords{Game theory; Continuous game; Multi-objective; Nash equilibria}
\newcommand{\BibTeX}{\rm B\kern-.05em{\sc i\kern-.025em b}\kern-.08em\TeX}
\definecolor{celestialblue}{rgb}{0.29, 0.59, 0.82}
\begin{document}


\pagestyle{fancy}
\fancyhead{}


\maketitle 


\section{Introduction}
\label{sec:introduction}
Connecting seemingly unrelated models can bridge together research communities and complement missing pieces in either setting. To that extent, we identify two well-known extensions of normal-form games and show that there are underlying equivalences which may be exploited for theoretical and algorithmic contributions. On the one hand, we consider continuous games which allow for an infinite amount of actions for players \citep{stein2008separable}. 
An example of such games can be found in economic models where firms have to set a price over a continuous range which maximises their profit in a competitive environment \citep{judd2012finding}. On the other hand, we consider Multi-Objective Normal-Form Games (MONFGs) \citep{blackwell1954analog}, which extend normal-form games by returning a vectorial payoff rather than a scalar payoff. Multi-objective games can for example be applied for scheduling household appliances in residential buildings with the objectives to minimise electricity cost while adhering as much as possible to the desired operating period \cite{lu2022multiobjective}.

Our theoretical results establish a novel equivalence relation between continuous games and MONFGs. The equivalence is shown to leave the underlying game dynamics intact, including Nash equilibria. The established connection allows one to straightforwardly transfer known results and algorithms from either model to the other, thus bridging the gap between them. From a theoretical perspective, much more is known about continuous games than MONFGs, enabling rapid advances in the latter model. Conversely, due to their succinct format, MONFGs are amenable to general algorithmic solutions which can then be applied to continuous games. The main contributions are summarised as follows\footnote{The supplementary material of this work can be found in \cite{ropke2023bridging}.}:


\begin{itemize}
    \item We define pure strategy equivalence between a continuous game and MONFG. This equivalence relation formalises the existence of a bijective function that maps pure strategies from the continuous game to mixed strategies in the MONFG and for which the utilities remain equal.
    \item We show that for every continuous game with convex strategy sets, a pure strategy equivalent MONFG can be constructed. We also show this in the other direction.
    \item We introduce hierarchical strategies for MONFGs and define mixed strategy equivalence between these strategies and the mixed strategies in a continuous game.
    \item We show that a pure strategy Nash equilibrium in a continuous game is a mixed strategy equilibrium in an equivalent MONFG. We extend this for mixed strategy equilibria in the continuous game and hierarchical equilibria in the MONFG.
    \item We guarantee the existence of a hierarchical Nash equilibrium in MONFGs with continuous utility functions. 
    \item We demonstrate the algorithmic value of our equivalence notion by computing a Nash equilibrium in two continuous games, namely a polynomial game and a Bertrand price game, using a fictitious play algorithm for MONFGs.
\end{itemize}



\section{Background}
\label{sec:background}
\subsection{Continuous Games}
\label{sec:continuous-game}
Continuous games, sometimes referred to as infinite games, extend the normal-form game model to include settings where players have a nonempty compact metric space of pure strategies, rather than a finite set. Intuitively, this means that players may have an infinite number of actions to choose from. We define a continuous game as follows \citep{stein2008separable},
\begin{definition}[Continuous game]
\label{def:continuous-game}
A continuous game is a tuple $(N, \mathcal{S}, v)$, where: 
\begin{itemize}
    \item $N$ is a finite set of $n$ players, indexed by $i$;
    \item $\mathcal{S} = S_1 \times \dots \times S_n$, where $S_i$ is a nonempty compact metric space of pure strategies available to player $i$;
    \item $v = (v_1, \dots , v_n)$ where $v_i : \mathcal{S} \to \mathbb{R}$ is a real-valued and continuous utility function for player $i$.
\end{itemize}
\end{definition}
The set of mixed strategies for player $i$ in a continuous game is defined as the set of Borel probability measures on $S_i$ and is denoted by $\mathcal{B}(S_i)$. We define the expected utility for player $i$ for a given joint mixed strategy $\mu \in \mathcal{B}(\mathcal{S})$ as follows \citep{stein2008separable,fu2021existence},
\begin{equation}
\label{eq:eu-ms-cg}
    v_i(\mu) = \int_{S_1 \times \cdots \times S_n} v_i(s_1, \cdots , s_n) d\mu_1(s_1) \cdots d\mu_n(s_n).
\end{equation}
We refer readers unfamiliar with these concepts from measure theory to a brief treatment of them in \cref{ap:measure-theory}.

\subsection{Multi-Objective Normal-Form Games}
\label{sec:monfg}
Multi-Objective Normal-Form Games (MONFGs) are a generalisation of (single-objective) normal-form games to vectorial payoffs. This is formalised as follows \citep{radulescu2020utilitybased}.
\begin{definition}[Multi-objective normal-form game]
\label{def:monfg}
A (finite, $n$-player) multi-objective normal-form game is a tuple $(N, \mathcal{A}, \bm{p})$, with $d$ objectives, where: 
\begin{itemize}
    \item $N$ is a finite set of $n$ players, indexed by $i$;
    \item $\mathcal{A} = A_1 \times \dots \times A_n$, where $A_i$ is a nonempty finite set of actions available to player $i$;
    \item $\bm{p} = (\bm{p}_1, \dots , \bm{p}_n)$ where $\bm{p}_i : \mathcal{A} \to \mathbb{R}^d$ is the vectorial payoff function for player $i$.
\end{itemize}
\end{definition}
A mixed strategy for any given player is defined as a probability distribution over their set of actions. A set of probability distributions over a finite number of points is known as a probability simplex, formally defined below.

\begin{definition}
\label{def:prob-simplex}
A probability $k$-simplex $\Delta^{k}$ is a set of points for which,
\begin{equation*}
    \Delta^{k} = \left \{ \left(x_0, \dots, x_k \right) \in \mathbb{R}^{k+1} \Biggm\vert \sum_{i=0}^k x_i = 1 \text{ and } x_i \geq 0 \text{ for } i = 0, \dots, k \right \}.
\end{equation*}
\end{definition}

As such, a mixed strategy for player $i$ is a probability distribution $\delta_i \in \Delta^{k_i}$ where $k_i = |A_i|-1$. Let $\Delta = \Delta^{k_1} \times \dots \times \Delta^{k_n}$ be the set of joint mixed strategies. The expected payoff for player $i$ of a mixed strategy $\delta \in \Delta$ is then naturally defined as,

\begin{equation}
\label{eq:exp-vec-payoff}
   \bm{p}_i\left(\delta\right) = \sum_{a \in \mathcal{A}}\bm{p}_i(a)\prod_{j = 1}^n \delta_j(a_j).
\end{equation}
 
Multi-objective decision making frequently assumes a utility-based approach \cite{roijers2017multiobjective}, which is also a standard game-theoretic approach. In the MONFG model, the utility of a player is based on a particular trade-off of the various payoffs, that is, we make the additional assumption that a utility function  $u_i: \mathbb{R}^d \to \mathbb{R}$ is known for any player $i$. We compute the utility of a mixed strategy $\delta$ as,
\begin{equation}
\label{eq:monfg-u}
   u_i\left(\bm{p}_i\left(\delta\right)\right) = u_i\left(\sum_{a \in \mathcal{A}}\bm{p}_i(a)\prod_{j = 1}^n \delta_j(a_j)\right).
\end{equation}

We note that an alternative definition exists for the utility of mixed strategies in multi-objective games, where players first apply the utility function to the payoffs and subsequently compute their expected utility. In the multi-objective decision making literature, this method is referred to as the expected scalarised returns criterion, while the method from \cref{eq:exp-vec-payoff} is referred to as the scalarised expected returns criterion \citep{hayes2022practical, radulescu2020multiobjective}.

\subsection{Nash Equilibria}
We consider a central solution concept in both models, namely the Nash equilibrium (NE). Informally, a Nash equilibrium is a joint strategy from which no player can unilaterally deviate and improve their utility. \Cref{def:continuous-nash-equilibrium} defines this in continuous games, while \cref{def:monfg-nash-equilibrium} defines this in MONFGs.

\begin{definition}[Nash equilibria in continuous games]
\label{def:continuous-nash-equilibrium}
A mixed strategy profile $\mu^\ast$ is a Nash equilibrium if,
\begin{equation*}
    v_i(\mu^\ast_i, \mu^\ast_{-i}) \geq v_i(\mu_i, \mu^\ast_{-i}),
\end{equation*}
for all players $i$ and mixed strategies $\mu_i \in \mathcal{B}(S_i)$.
\end{definition}

\begin{definition}[Nash equilibria in multi-objective normal-form games]
\label{def:monfg-nash-equilibrium}
A mixed strategy profile $\delta^\ast$ is a Nash equilibrium if,
\begin{equation*}
    u_i(\delta^\ast_i, \delta^\ast_{-i}) \geq u_i(\delta_i, \delta^\ast_{-i}),
\end{equation*}
for all players $i$ and mixed strategies $\delta_i \in \Delta^{k_i}$.
\end{definition}
An important early result is that every continuous game must have a mixed strategy Nash equilibrium \citep{glicksberg1952further}. In MONFGs however, Nash equilibria are not guaranteed to exist under nonlinear utility functions \citep{radulescu2020utilitybased}. In \cref{sec:mixed-strategy-equivalence}, we expand further on this issue and provide a novel Nash equilibrium existence result for MONFGs.


\section{Equivalence Relation}
\label{sec:equivalence-relation}
Our main contribution establishes an equivalence relation between continuous and multi-objective games. With this goal in mind, we first introduce a special game, called an identity game, which plays a crucial role in bridging the two models. Next, we prove that a bijective mapping for pure strategies in the continuous game to mixed strategies in the MONFG always exists. To complete the full equivalence, we introduce a novel concept for MONFGs, which we call hierarchical strategies. Lastly, we show that Nash equilibria are preserved in the process, allowing us to guarantee a Nash equilibrium in hierarchical strategies in MONFGs.

\subsection{Identity Game}
\label{sec:identity-game}
An identity game returns, as the name suggests, a payoff vector equal to the strategy it received as input. We will use such games in \cref{sec:pure-strategy-equivalence} to prove an equivalent MONFG can be constructed for every continuous game and vice versa.

\begin{lemma}[Identity Game]
\label{lemma:identity-game}
For any finite set of players and finite sets of pure strategies, there exists a set of payoff functions $\bm{p}$ such that for each player $i$, $\bm{p}_i(\delta) = \delta$. 
\end{lemma}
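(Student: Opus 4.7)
The plan is to construct an identity game explicitly by taking the vectorial payoffs to be indicator (one-hot) vectors over joint actions. Concretely, I would let the payoff dimension $d$ equal the cardinality $|\mathcal{A}|$ of the joint action space, fix an enumeration of $\mathcal{A}$, and for each player $i$ and each joint action $a \in \mathcal{A}$ define $\bm{p}_i(a) = e_a$, the standard basis vector in $\mathbb{R}^{|\mathcal{A}|}$ that has a $1$ in the coordinate indexed by $a$ and $0$ elsewhere. Note that the same payoff function is used for every player, which is permitted since the lemma places no constraint requiring the players' payoffs to differ.

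The key step is to substitute this choice into the expected payoff formula from \cref{eq:exp-vec-payoff}. This gives
\begin{equation*}
    \bm{p}_i(\delta) \;=\; \sum_{a \in \mathcal{A}} e_a \prod_{j=1}^n \delta_j(a_j),
\end{equation*}
so the coordinate of $\bm{p}_i(\delta)$ indexed by $a$ is precisely $\prod_{j=1}^n \delta_j(a_j)$, which is the probability that the joint action $a$ is played under the mixed strategy profile $\delta$. Identifying the joint mixed strategy $\delta$ with the product distribution it induces on $\mathcal{A}$, we conclude that $\bm{p}_i(\delta) = \delta$ for every player $i$.

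I do not expect a genuine technical obstacle here, since the claim holds essentially by construction. The only subtlety worth flagging in the write-up is the implicit identification between a joint mixed strategy profile $\delta = (\delta_1, \ldots, \delta_n) \in \Delta^{k_1} \times \cdots \times \Delta^{k_n}$ and the induced product probability distribution on $\mathcal{A}$; without this identification the equation $\bm{p}_i(\delta) = \delta$ is not even type-correct, because the left-hand side is a vector in $\mathbb{R}^{|\mathcal{A}|}$ while the right-hand side is a tuple of marginals. Making this identification explicit (and observing that both objects are determined by the same collection of numbers $\prod_j \delta_j(a_j)$) is all that is needed to close the argument cleanly.
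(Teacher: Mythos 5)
Your construction is correct, and the computation verifying it is right, but it is genuinely different from the paper's. The paper takes the payoff dimension to be $d=\sum_{i}|A_i|$ and defines $\bm{p}_i(a)$ as the \emph{concatenation of the per-player one-hot encodings} of each $a_j$ (so in the $2\times2$ game of \cref{fig:identity-game-2-2}, the joint action $(A,A)$ gets payoff $(1,0,1,0)$, not $(1,0,0,0)$); the law of total probability then shows that the expected payoff is literally the concatenated tuple of marginals $(\delta_1,\dots,\delta_n)$, which is exactly how the paper represents a joint strategy (cf.\ \cref{ex:identity-game}). Your version instead takes $d=|\mathcal{A}|=\prod_i|A_i|$ with one-hot vectors over \emph{joint} actions, so the expected payoff is the full product distribution on $\mathcal{A}$. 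That buys you an immediate one-line verification and a payoff that carries strictly more information than the marginals, but at two costs: the dimension is exponential in $n$ rather than linear in $\sum_i|A_i|$, and the equality $\bm{p}_i(\delta)=\delta$ holds only after the identification you flag, whereas the paper's encoding makes it a literal vector equality. That identification is not merely cosmetic for the paper's purposes: in the proof of \cref{th:cg-pse-monfg} the utility is defined as $u_i = v_i\circ f_i^{-1}$ with $f_i^{-1}:\Delta^{k_i}\to S_i$ acting on the marginal tuple, so under your construction one would additionally have to compose with the (linear, hence harmless) marginalization map from product distributions on $\mathcal{A}$ back to $\Delta^{k_1}\times\cdots\times\Delta^{k_n}$. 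So the proposal proves the lemma in substance, but if it were to replace the paper's proof, the surrounding results would need that small adjustment and \cref{fig:identity-game-2-2} would change.
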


\begin{proofsketch}
The main idea behind the proof is to define the payoff vectors for all pure strategies as the pure strategy itself. Note that, strictly speaking, this relies on the notation of joint strategies, which we define as a vector of concatenated individual strategies. It is then possible to show, using the law of total probability, that this results in the desired property.
\end{proofsketch}

A formal proof is given in \cref{ap:identity-game}. To illustrate the payoff mechanism in more detail, we provide an example below.

\begin{figure}[th]
    \centering
    \begin{game}{2}{2}
                  & $A$        & $B$\\
        $A$   & $(1, 0, 1, 0); (1, 0, 1, 0)$       & $(1, 0, 0, 1); (1, 0, 0, 1)$ \\
        $B$   & $(0, 1, 1, 0); (0, 1, 1, 0)$       & $(0, 1, 0, 1); (0, 1, 0, 1)$
    \end{game}
    \caption{The identity game for a 2-player 2-action setting.}
    \label{fig:identity-game-2-2}
\end{figure}

\begin{exmp}
\label{ex:identity-game}
Consider the identity game in \cref{fig:identity-game-2-2}. Assume that player one plays the mixed strategy $\delta_1 = (\frac{1}{2}, \frac{1}{2})$ and player two plays the mixed strategy $\delta_2 = (\frac{1}{3}, \frac{2}{3})$. This leads to a joint strategy $\delta = (\delta_1, \delta_2) = (\frac{1}{2}, \frac{1}{2}, \frac{1}{3}, \frac{2}{3})$. According to \cref{lemma:identity-game}, the expected payoff vector should then also be $(\frac{1}{2}, \frac{1}{2}, \frac{1}{3}, \frac{2}{3})$. We verify this:

\begin{align*}
    \bm{p}_i(\delta) & = \sum_{a \in \mathcal{A}}\bm{p}_i(a)\prod_{j = 1}^n \delta_j(a_j) \\
    & = (1, 0, 1, 0) \cdot \frac{1}{6} + (1, 0, 0, 1) \cdot \frac{1}{3} + (0, 1, 1, 0) \cdot \frac{1}{6} + (0, 1, 0, 1) \cdot \frac{1}{3} \\
    & = \left(\frac{1}{6}, 0, \frac{1}{6}, 0 \right) + \left(\frac{1}{3}, 0, 0, \frac{1}{3} \right) + \left(0, \frac{1}{6}, \frac{1}{6}, 0 \right) + \left(0, \frac{1}{3}, 0, \frac{1}{3} \right) \\
    & = \left(\frac{1}{2}, \frac{1}{2}, \frac{1}{3}, \frac{2}{3}\right) \\
    & = \delta.
\end{align*}
\end{exmp}

\subsection{Pure Strategy Equivalence}
We introduce a novel equivalence notion between continuous games and MONFGs, called Pure Strategy Equivalence (PSE). Informally, two games are pure strategy equivalent whenever the pure strategies from the continuous game can be bijectively mapped to mixed strategies in the MONFG while keeping the corresponding utilities equal. We formally define this below.

\begin{definition}[Pure strategy equivalence]
\label{def:ps-equivalence}
A continuous game $G_c = (N_c, \mathcal{S}, v)$ is pure strategy equivalent to a finite multi-objective normal-form game $G_m = (N_m, \mathcal{A}, \bm{p})$ with utility functions $u$ when there exists a tuple of functions $(\pi, \varphi)$ such that:
\begin{itemize}
    \item $\pi: N_c \to N_m$ is a bijective function called the \emph{player bijection};
    \item $\varphi = \varphi_1 \times \dots \times \varphi_n$ where $\forall i \in N_c, \varphi_i: S_{i} \to \Delta^{k_{\pi(i)}}$ is a continuous bijective function with a continuous inverse, called the \emph{strategy bijection};
    \item $\forall i \in N_c, \forall s \in \mathcal{S}, v_{i}(s) = u_{\pi(i)}\left(\bm{p}_{\pi(i)}\left(\varphi\left(s\right)\right)\right)$. 
\end{itemize}
\end{definition}

We first show that for every MONFG there exists a pure strategy equivalent continuous game. We note that this property has already been used \citep[Lemma~1]{ropke2022nash}, but has not yet been explicitly described in terms of pure strategy equivalence.

\label{sec:pure-strategy-equivalence}
\begin{theorem}
\label{th:monfg-pse-cg}
For every multi-objective normal-form game with continuous utility functions, there exists a pure strategy equivalent continuous game.
\end{theorem}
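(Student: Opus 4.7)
The plan is to construct the continuous game by taking the mixed strategy simplices of the MONFG as the pure strategy spaces of the continuous game, and then composing the vectorial expected payoff with the utility function to obtain the real-valued utility.

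More concretely, given the MONFG $G_m = (N_m, \mathcal{A}, \bm{p})$ with continuous utility functions $u_1, \dots, u_n$, I would set $N_c = N_m$ and take $\pi$ to be the identity on this player set. For each player $i$, I would define $S_i = \Delta^{k_i}$, the probability simplex over the action set $A_i$. Each $\Delta^{k_i}$ is a nonempty (it contains the vertices corresponding to pure actions), closed and bounded subset of $\mathbb{R}^{k_i+1}$, hence compact; equipped with the Euclidean metric it is a compact metric space, meeting the requirement in \cref{def:continuous-game}. For the strategy bijection I would take $\varphi_i : S_i \to \Delta^{k_i}$ to be the identity map, which is trivially a continuous bijection with continuous inverse.

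Next I would define the utility functions by $v_i(s) = u_i(\bm{p}_i(s))$ for every joint pure strategy $s = (s_1,\dots,s_n) \in \mathcal{S}$, where $\bm{p}_i(s)$ is the expected vector payoff from \cref{eq:exp-vec-payoff} applied to the mixed strategy profile $s$. The third condition in \cref{def:ps-equivalence} then holds by construction, since $\varphi$ is the identity. What remains to verify is that each $v_i$ is real-valued and continuous. Real-valuedness is immediate since $u_i : \mathbb{R}^d \to \mathbb{R}$. For continuity, observe that $\bm{p}_i$ as a function of the mixed strategy profile is a polynomial (in fact multilinear) in the coordinates of the $s_j$, hence continuous from $\mathcal{S}$ to $\mathbb{R}^d$; composing with the continuous $u_i$ yields a continuous $v_i : \mathcal{S} \to \mathbb{R}$.

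There is no substantial obstacle here: the construction is essentially a repackaging of the MONFG's mixed-extension as a continuous game. The only points requiring care are the verifications that the simplices form nonempty compact metric spaces and that the composed utility is continuous, both of which are standard. The hypothesis that the utility functions $u_i$ are continuous is used exactly once, in establishing the continuity of $v_i$, and is essential for the resulting tuple to satisfy \cref{def:continuous-game}.
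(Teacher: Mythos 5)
Your proposal is correct and follows essentially the same construction as the paper: take $S_i = \Delta^{k_i}$, let both bijections be identities, and set $v_i = u_i \circ \bm{p}_i$. Your additional verifications (compactness of the simplex, multilinearity hence continuity of $\bm{p}_i$) are steps the paper leaves implicit but do not change the argument.
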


\begin{proof}
Let $G_m = (N_m, \mathcal{A}, \bm{p})$ be a multi-objective normal-form game and let $u$ be the set of continuous utility functions used by players in $G_m$. We construct a continuous game $G_c = (N_c, \mathcal{S}, v)$ that is pure strategy equivalent to $G_m$.

First, let $N_c = N_m$, making the player bijection $\pi = \mathbbm{1}$. To simplify the notation, we can directly substitute $\pi(i) = i$.

Recall that the set of mixed strategies for player $i$ in $G_m$ was defined as a simplex $\Delta^{k_i}$.
For all $S_i \in \mathcal{S}$, we define $S_{i} = \Delta^{k_i}$, thus satisfying the condition of being a nonempty compact metric space. As such, each strategy bijection $\varphi_i = \mathbbm{1}$.

Finally, define each $v_{i} = u_{i} \circ \bm{p}_i$. Observe that each $v_{i}$ is continuous as it is a composition of two continuous functions. As such, $G_c$ is pure strategy equivalent to $G_m$.
\end{proof}

The proof presented here provides an explicit construction of a pure strategy equivalent continuous game for any MONFG, which we outline in \cref{ap:constructing-games} (see \cref{alg:cg-construction}). Moreover, as the strategy sets in both games are equal, the strategy bijection is simply the identity function. This is critical from an algorithmic viewpoint, as it ensures that the mapping may be computed efficiently. This result has two important implications. First, observe that we may also perform the construction in single-player settings, such as those studied in multi-objective planning and reinforcement learning. This suggests that algorithms designed for continuous action spaces can be used whenever the utility function of the agent is known a priori. Secondly, whenever the resulting utility functions are (twice) differentiable, the resulting game falls under the class of differentiable games, which may be solved using efficient gradient-based methods \cite{letcher2019differentiable}.

We now show a converse to \cref{th:monfg-pse-cg}, namely that every continuous game with convex strategy sets can be mapped to a pure strategy equivalent MONFG. 

\begin{theorem}
\label{th:cg-pse-monfg}
For every continuous game whose strategy spaces are convex subsets of an Euclidean space, there exists a pure strategy equivalent multi-objective normal-form game.
\end{theorem}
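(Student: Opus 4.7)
The plan is to mirror the structure of Theorem~3.2 but in reverse, using the identity game from Lemma~3.1 to carry the payoffs transparently and pushing all of the game's complexity into the utility functions. Let $G_c=(N_c,\mathcal{S},v)$ be given, with each $S_i$ a compact convex subset of some Euclidean space of affine dimension $k_i$.

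First I would construct the strategy bijections. A standard fact in convex geometry is that any nonempty compact convex subset of Euclidean space is homeomorphic to a closed ball of dimension equal to its affine dimension, for example via the Minkowski gauge based at a relative interior point combined with radial projection. Since the standard simplex $\Delta^{k_i}$ is itself homeomorphic to the closed $k_i$-ball, composing these homeomorphisms yields a continuous bijection $\varphi_i:S_i\to\Delta^{k_i}$ with continuous inverse, fulfilling the strategy bijection requirement of \cref{def:ps-equivalence}.

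Next I would construct the MONFG $G_m=(N_m,\mathcal{A},\bm{p})$. Set $N_m=N_c$, so the player bijection is $\pi=\mathbbm{1}$; choose $|A_i|=k_i+1$ so that the mixed-strategy simplex of player $i$ in $G_m$ coincides with $\Delta^{k_i}$; and let $\bm{p}$ be the identity payoff supplied by \cref{lemma:identity-game}, so that $\bm{p}_i(\delta)=\delta$ for every $\delta\in\Delta$. It remains to define the utility functions. On the closed subset $\Delta\subseteq\mathbb{R}^d$, where $d=\sum_i(k_i+1)$, set
\[
\tilde u_i(\delta_1,\dots,\delta_n)=v_i\bigl(\varphi_1^{-1}(\delta_1),\dots,\varphi_n^{-1}(\delta_n)\bigr),
\]
which is continuous as a composition of continuous functions, and extend each $\tilde u_i$ to a continuous function $u_i:\mathbb{R}^d\to\mathbb{R}$ using the Tietze extension theorem. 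Pure strategy equivalence then follows by direct verification: for every $s\in\mathcal{S}$,
\[
u_i\bigl(\bm{p}_i(\varphi(s))\bigr)=u_i(\varphi(s))=v_i\bigl(\varphi_1^{-1}(\varphi_1(s_1)),\dots,\varphi_n^{-1}(\varphi_n(s_n))\bigr)=v_i(s).
\]

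The main obstacle, and the only place non-trivial outside machinery is invoked, is the construction of the homeomorphism $\varphi_i$. The subtlety is to handle convex sets whose interior in the ambient Euclidean space is empty: one first passes to the affine hull of $S_i$ and applies the gauge construction there, so that $k_i$ is the affine rather than the ambient dimension. Once this homeomorphism is in hand, the choice $|A_i|=k_i+1$, the use of the identity game to eliminate any nontrivial payoff computation, and the Tietze extension of $\tilde u_i$ are all routine.
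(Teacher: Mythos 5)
Your proposal is correct and follows essentially the same route as the paper's proof: identity-game payoffs from \cref{lemma:identity-game}, a homeomorphism $\varphi_i:S_i\to\Delta^{k_i}$ (with the empty-interior case handled by passing to the affine hull), and utilities $u_i=v_i\circ\varphi_i^{-1}$ verified by the same chain of substitutions. The only additions are that you spell out the gauge/ball construction of the homeomorphism (which the paper relegates to an appendix, citing the standard result in the main proof) and invoke Tietze to extend $u_i$ to all of $\mathbb{R}^d$ --- a small point of rigour the paper leaves implicit, since it only ever evaluates $u_i$ on the simplex.
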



\begin{proof}
Let $G_c = (N_c, \mathcal{S}, v)$ be a continuous game where each $S_i \in \mathcal{S}$ is a compact, convex and nonempty subset of Euclidean space. We construct a finite multi-objective normal-form game $G_m = (N_m, \mathcal{A}, \bm{p})$ with utility functions $u$ that is pure strategy equivalent to $G_c$.


The player bijection is trivial by letting $N_m = N_c$ and thus $\pi = \mathbbm{1}$. To simplify the notation, we can directly substitute $\pi(i) = i$ and refer to $N_m$ or $N_c$ simply as $N$.

It is a known property that all compact convex subsets of $\mathbb{R}^k$ with a nonempty interior are homeomorphic to the probability $k$-simplex \citep[Theorem~16.4]{bredon1993general}. Let us first assume that each strategy set $S_i$ indeed has a nonempty interior. This means that for every player $i$ there exists a continuous bijective function $f_i$ with a continuous inverse $f^{-1}_i$ such that,
\begin{equation}
    f_i: S_{i} \to \Delta^{k_i}.
\end{equation}
Recall that the set of mixed strategies $\Delta^{k_{i}}$ for player $i$ in $G_m$ is defined as the probability simplex over their actions. Therefore, we may use this homeomorphism to construct the set of mixed strategies over $k_i + 1$ actions,
\begin{equation}
    \forall i \in N: f_i(S_{i}) = \Delta^{k_i}.
\end{equation}
Note that the vertices of the simplex naturally represent the actions in $G_m$. This ensures that each joint strategy can be bijectively mapped from $G_c$ to $G_m$ with
\begin{equation}
    \forall s \in \mathcal{S}: f(s) = f_1(s_{1}) \times \dots \times f_n(s_{n}).
\end{equation}
We can therefore define the strategy bijection $\varphi = f$. Because $f$ is a homeomorphism, it is by definition a continuous bijective function with a continuous inverse.

Let the payoff functions $\bm{p}$ of $G_m$ be the payoffs of the identity game from Lemma \ref{lemma:identity-game}. We define the utility functions $u$ in $G_m$ as follows,
\begin{equation}
    \forall i \in N, u_{i} = v_{i} \circ f^{-1}_i.
\end{equation}
We show that this ensures that,
\begin{equation}
    \forall i \in N, \forall s \in \mathcal{S}: v_{i}(s) = u_{i}\left( \bm{p}_{i}\left(\varphi_i\left(s\right)\right)\right).
\end{equation}
Substituting the necessary values, we get $\forall i \in N, \forall s \in \mathcal{S}$, 
\begin{align}
    u_{i}\left( \bm{p}_{i}\left(\varphi_i\left(s\right)\right)\right) & = u_{i}\left(\bm{p}_{i}\left(f_i\left(s\right)\right)\right) \\
    & = u_{i}\left(\bm{p}_{i}\left(\delta\right)\right) \\
    & = u_{i}\left(\delta\right) \\
    & = v_{i} \circ f^{-1}_i \left(\delta\right) \\
    & = v_{i} (s).
\end{align}
As such, $G_c$ is pure strategy equivalent to $G_m$.

In the case where one or more $S_{i}$ has an empty interior, those $S_{i}$ have a nonempty interior with respect to their affine span. Then $S_{i}$ is homeomorphic to the $k$-simplex, where $k = \text{dim aff}(S_i)$. The remainder of the construction follows analogous to before.
\end{proof}

The main idea behind the proof is to construct an MONFG using the identity game payoffs and reuse the utility functions from the continuous game. As we assumed convexity of the strategy sets, it is known that a homeomorphism between each strategy set and a probability simplex can be constructed. These simplices are the mixed strategies in the MONFG. The utility functions in the MONFG can then be defined as the composition between the utility functions of the continuous game and the inverse of each homeomorphism.

The construction defined in this proof also establishes a computational approach to transform a continuous game to an MONFG, formalised in \cref{ap:constructing-games} (see \cref{alg:monfg-construction}). However, contrary to the other direction, the strategy bijection does appear here. As such, the algorithm requires the strategy bijection to be explicitly defined which poses two distinct challenges. First, these strategy bijections may not be easily obtainable, in which case we present a standard approach that initially goes through a unit ball and subsequently to the probability simplex as a suitable first attempt in \cref{ap:constructing-homeomorphisms}. Second, the resulting functions are not guaranteed to be efficiently computable, hence rendering the procedure intractable for some applications. We leave the application of \cref{th:cg-pse-monfg} to such scenarios for future work.

An interesting implication of \cref{th:cg-pse-monfg} is that it can straightforwardly be extended to continuous games with non-convex strategy sets whenever they are still homeomorphic to a simplex. However, while convexity of strategy sets is a frequently made assumption for continuous games, there can be games of interest which do not have this property. In this case, one can approximate the original continuous game with another continuous game that does satisfy the convexity requirement by, for example, taking the convex hull of each strategy set. In a subsequent stage, a pure strategy equivalent MONFG can be constructed for this approximate game, resulting in an MONFG which is approximately PSE to the original continuous game. We demonstrate this approach in \cref{sec:bertrand-price-game} and show that it may still succeed in capturing the original game. 

Given the constructions outlined in \cref{th:monfg-pse-cg,th:cg-pse-monfg}, it raises the question of whether they lead to unique descriptions of pure strategy equivalent games. The following remark shows that this is not the case. We show this formally in \cref{ap:constructing-games} by constructing multiple pure strategy equivalent games.
\begin{remark}
\label{re:non-uniqueness}
A continuous game may have multiple pure strategy equivalent multi-objective normal-form games and vice versa.
\end{remark}

An important consequence of \cref{re:non-uniqueness} is that some PSE games can be more amenable to analysis than others. For example, we have shown that for a continuous game a pure strategy equivalent MONFG can be constructed with identity game payoffs. If this continuous game can also be proven to be PSE to an MONFG with a dominated action in its payoff and monotonic utility functions, the latter game will certainly be easier to analyse. Using equivalence between games to solve a difficult game through a simpler equivalent has been studied with success before \citep{heyman2019computation}. 


\subsection{Mixed Strategy Equivalence}
\label{sec:mixed-strategy-equivalence}
Given the results for pure strategy equivalence, a pressing question becomes how to handle the mixed strategies from a continuous game in an equivalent MONFG. For this purpose, we introduce a novel strategy concept in MONFGs, named \emph{hierarchical strategies}. Hierarchical strategies allow for mixing over mixed strategies and are defined for each player $i$ as the set of Borel probability measures on their set of mixed strategies $\Delta^{k_i}$ and denoted by $\mathcal{B}(\Delta^{k_i})$. The expected utility of a hierarchical strategy $\mu$ is then defined analogously to the way expected utility of a mixed strategy in continuous games is defined in \cref{eq:eu-ms-cg}, that is,
\begin{equation}
    u_i(\mu) = \int_{\Delta^{k_1} \times \cdots \times \Delta^{k_n}} u_i\left(\bm{p}_i \left(\delta_1, \cdots , \delta_n\right) \right) d\mu_1(\delta_1) \cdots d\mu_n(\delta_n).
\end{equation}
We stress that the probabilities in a hierarchical strategy cannot be distributed to form an equivalent mixed strategy. This is because the utility of the resulting mixed strategy need not equal the expected utility of the hierarchical strategy, specifically when nonlinear utility functions are used. We illustrate this in \cref{ex:hierarchical-strategy}.

\begin{figure}[h]
    \centering
    \begin{game}{2}{2}
                  & $A$        & $B$\\
        $A$   & $(3, 1); (3, 1)$       & $(1, 3); (1, 3)$ \\
        $B$   & $(1, 3); (1, 3)$       & $(3, 1); (3, 1)$
    \end{game}
    \caption{The game used in \cref{ex:hierarchical-strategy}.}
    \label{fig:hierarchical-strategy}
\end{figure}
\begin{exmp}
\label{ex:hierarchical-strategy}
    Consider the game in \cref{fig:hierarchical-strategy} and utility functions
    \begin{equation}
    \label{eq:balanced}
        u_1(x, y) = u_2(x, y) = x^2 + y^2.
    \end{equation}
    For the row player, we define two strategies, $\delta_{1,1} = (1, 0)$ and $\delta_{1,2} = (0, 1)$. For simplicity, we assume a deterministic strategy $\delta_2 = (1, 0)$ for the column player where they always play action $A$. For both players, the joint strategy $(\delta_{1,1}, \delta_2)$ leads to an expected payoff vector of $(3, 1)$ and $(\delta_{1,2}, \delta_2)$ to an expected payoff vector of $(1, 3)$. Both joint strategies individually result in a utility of 10. 
    
    Consider now the following hierarchical strategy for the row player,
    \begin{equation*}
        \mu_1 = \left(P(\delta_{1,1}) = \frac{1}{2}, P(\delta_{1,2}) = \frac{1}{2}\right).
    \end{equation*}
    This hierarchical strategy denotes the fact that they will play $\delta_{1,1}$ with 50\% probability and $\delta_{1,2}$ with 50\% probability. As both strategies result in a utility of 10, the expected utility of $\mu_1$ is also 10. However, distributing the probabilities in $\mu_1$ to form a mixed strategy, $\delta_{1,3} = (\frac{1}{2}, \frac{1}{2})$, results in an expected payoff vector of $(2, 2)$ with a utility of 8. This demonstrates that a hierarchical strategy cannot be distributed to form an equivalent mixed strategy.
\end{exmp}

While mainly used as an instrument in our analysis, hierarchical strategies may also have an applied use in multi-objective games. Concretely, hierarchical strategies are appropriate to consider when agents have to decide on a strategy which is then executed for a given period without the possibility for downstream adjustments. For example, agents joining an auction may have to commit a single mixed strategy to an automated auctioneer a priori, after which auctions are held for a number of rounds without further input from the agents.

We now define mixed strategy equivalence between a continuous game and an MONFG. Informally, this equivalence notion generalises pure strategy equivalence to relate mixed strategies from the continuous game with hierarchical strategies in the MONFG.



\begin{definition}[Mixed strategy equivalence]
\label{def:ms-equivalence}
Let $G_c = (N_c, \mathcal{S}, v)$ be a continuous game and $G_m = (N_m, \mathcal{A}, \bm{p})$ a finite multi-objective normal-form game with utility functions $u$. $G_c$ is mixed strategy equivalent to $G_m$ if they are pure strategy equivalent with $(\pi, \varphi)$ and there exists a function $\psi$ such that,
\begin{itemize}
    \item $\psi = \psi_1 \times \dots \times \psi_n$ where $\forall i \in N_c, \psi_i: \mathcal{B}(S_{i}) \to \mathcal{B}(\Delta^{k_{\pi(i)}})$ is a bijective function;
    \item $\forall i \in N_c, \forall \mu \in \mathcal{B}(\mathcal{S}), v_{i}(\mu) = u_{\pi(i)}\left(\psi\left(\mu\right)\right)$.
\end{itemize}
\end{definition}

We remark that, by definition, mixed strategy equivalence implies pure strategy equivalence. Moreover, recall that the definition of mixed strategies in continuous games is similar to the definition of the set of hierarchical strategies in MONFGs. This is no coincidence and allows us to show that whenever a continuous game is pure strategy equivalent to an MONFG, it also implies mixed strategy equivalence. A formal proof of this is deferred to \cref{ap:mixed-strategy-equivalence}.
\begin{theorem}
\label{th:cg-mse-monfg}
If a continuous game is pure strategy equivalent to a multi-objective normal-form game, they are also mixed strategy equivalent.
\end{theorem}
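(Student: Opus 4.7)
The plan is to define the measure bijection $\psi_i$ as the pushforward of Borel probability measures under the strategy bijection $\varphi_i$ given by pure strategy equivalence, and then to verify the integral identity via a standard change of variables argument.

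First, I would set $\psi_i(\mu_i) \coloneqq (\varphi_i)_* \mu_i$, defined by $\psi_i(\mu_i)(B) = \mu_i(\varphi_i^{-1}(B))$ for every Borel set $B \subseteq \Delta^{k_{\pi(i)}}$. Because $\varphi_i$ is a homeomorphism between two compact metric spaces (its inverse is continuous by assumption), both $\varphi_i$ and $\varphi_i^{-1}$ are Borel measurable, so pushing forward is well-defined in both directions. In particular, $(\varphi_i^{-1})_*$ is the two-sided inverse of $(\varphi_i)_*$ on the space of Borel probability measures, so each $\psi_i$ is a bijection. I would then take $\psi = \psi_1 \times \cdots \times \psi_n$ and note that, because $\varphi = \varphi_1 \times \cdots \times \varphi_n$ is itself a homeomorphism between $\mathcal{S}$ and $\Delta^{k_1}\times\cdots\times\Delta^{k_n}$, the pushforward of the product measure $\mu_1\otimes\cdots\otimes\mu_n$ is the product $\psi_1(\mu_1)\otimes\cdots\otimes\psi_n(\mu_n)$; this is the standard statement that pushforward commutes with independent products.

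Next I would verify the utility identity. Fix a player $i \in N_c$ and a joint mixed strategy $\mu\in\mathcal{B}(\mathcal{S})$. Pure strategy equivalence gives $v_i(s) = u_{\pi(i)}\bigl(\bm{p}_{\pi(i)}(\varphi(s))\bigr)$ for every $s\in\mathcal{S}$, and the integrand on the right-hand side is Borel measurable because $\varphi$, $\bm{p}_{\pi(i)}$ (polynomial in the mixed-strategy weights) and $u_{\pi(i)}$ are continuous on compact spaces. Applying the change of variables formula (law of the unconscious statistician) to the measurable map $\varphi\colon\mathcal{S}\to\Delta^{k_1}\times\cdots\times\Delta^{k_n}$ and the product measure $\mu_1\otimes\cdots\otimes\mu_n$ yields
\begin{align*}
v_i(\mu) &= \int_{\mathcal{S}} v_i(s)\, d\mu_1(s_1)\cdots d\mu_n(s_n) \\
&= \int_{\mathcal{S}} u_{\pi(i)}\bigl(\bm{p}_{\pi(i)}(\varphi(s))\bigr)\, d\mu_1(s_1)\cdots d\mu_n(s_n) \\
&= \int_{\Delta^{k_1}\times\cdots\times\Delta^{k_n}} u_{\pi(i)}\bigl(\bm{p}_{\pi(i)}(\delta)\bigr)\, d\psi_1(\mu_1)(\delta_1)\cdots d\psi_n(\mu_n)(\delta_n) \\
&= u_{\pi(i)}\bigl(\psi(\mu)\bigr),
\end{align*}
which is exactly the required equality.

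The main obstacle I anticipate is not conceptual but bookkeeping: one must make sure that pushforward by a product of homeomorphisms really equals the product of pushforwards, and that the change of variables is applied to the genuine product measure (so that Fubini lets the iterated integrals be interchanged freely). Both follow from standard facts about Borel probability measures on compact metric spaces together with the fact that each $\varphi_i$ is a measurable isomorphism, so the step is routine once framed correctly. A brief remark pointing the reader to the measure-theoretic appendix (\cref{ap:measure-theory}) for the change-of-variables and product-measure facts would make the proof self-contained without dwelling on the technicalities.
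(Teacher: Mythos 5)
Your proposal is correct and follows essentially the same route as the paper: the paper also defines $\psi_i$ as the pushforward $\varphi_{i*}$ (establishing bijectivity via a folklore lemma on pushforwards under homeomorphisms, proved in \cref{ap:measure-theory}) and then verifies the utility identity by substituting the pure-strategy-equivalence relation into the integral and changing variables. Your extra care about the pushforward of a product measure being the product of the pushforwards is a point the paper leaves implicit, but it does not change the argument.
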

\begin{proofsketch}
By definition of pure strategy equivalence, a continuous bijective function with a continuous inverse is given which maps strategies from one game to the other. Moreover, mixed strategies and hierarchical strategies were respectively defined to be the set of all Borel probability measures over their pure strategies and mixed strategies. We can use this fact to show that a measure defined over the strategies in one set must also be a measure in the other set defined over the mapped strategies. This also implies equal utility, thereby completing the proof.
\end{proofsketch}


\subsection{Mapping of Nash Equilibria}
\label{sec:mapping-of-ne}
Our final theoretical contributions consider Nash equilibria in pure strategy equivalent games. \Cref{th:ps-ms-ne-equivalence} first specifies that pure strategy Nash equilibria in continuous games correspond to mixed strategy Nash equilibria in pure strategy equivalent MONFGs. Intuitively, this is clear as utilities for pure strategies in the continuous game were already guaranteed to be equal to the utilities for mixed strategies in the MONFG. Therefore, if a joint strategy cannot be improved upon unilaterally in either game, the mapped joint strategy in the related game can also not be improved upon.

\begin{theorem}
\label{th:ps-ms-ne-equivalence}
A pure strategy is a Nash equilibrium in a continuous game if and only if it is a mixed strategy Nash equilibrium in a pure strategy equivalent multi-objective normal-form game.
\end{theorem}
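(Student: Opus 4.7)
The plan is to use the pure strategy equivalence $(\pi, \varphi)$ guaranteed between $G_c$ and $G_m$ to translate the Nash equilibrium conditions directly between the two games. The key observation is that, by the definition of PSE, $\varphi_i$ is a bijection from $S_i$ onto $\Delta^{k_{\pi(i)}}$ and the utility identity $v_i(s) = u_{\pi(i)}(\bm{p}_{\pi(i)}(\varphi(s)))$ holds on all joint pure strategies. This matches every mixed-strategy deviation in $G_m$ bijectively with a pure-strategy deviation in $G_c$ of identical utility, so the two NE conditions should correspond, modulo a subtlety about which deviations the CG definition admits.

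For the forward direction, I would assume $s^\ast$ is a pure strategy Nash equilibrium in $G_c$, fix a player $i$ and any $\delta_i \in \Delta^{k_{\pi(i)}}$, and set $s_i' = \varphi_i^{-1}(\delta_i)$. Chaining the PSE utility identity with the NE condition for $G_c$ applied to the pure deviation $s_i'$ yields
\begin{equation*}
u_{\pi(i)}(\bm{p}_{\pi(i)}(\delta_i, \varphi_{-i}(s_{-i}^\ast))) = v_i(s_i', s_{-i}^\ast) \leq v_i(s^\ast) = u_{\pi(i)}(\bm{p}_{\pi(i)}(\varphi(s^\ast))),
\end{equation*}
which is the MONFG equilibrium condition at player $\pi(i)$; since $\delta_i$ is arbitrary and $\varphi_i$ is surjective, this covers all mixed deviations in $G_m$.

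For the backward direction, I would assume $\varphi(s^\ast)$ is a mixed Nash equilibrium in $G_m$ and run the same chain in reverse, instantiating $\delta_i = \varphi_i(s_i')$ to obtain $v_i(s^\ast) \geq v_i(s_i', s_{-i}^\ast)$ for every pure deviation $s_i' \in S_i$. This by itself only rules out pure deviations in $G_c$, whereas the CG Nash equilibrium definition also admits arbitrary Borel mixed deviations $\mu_i \in \mathcal{B}(S_i)$. I would close this gap using the integral definition of expected utility in continuous games: the pointwise bound $v_i(s_i, s_{-i}^\ast) \leq v_i(s^\ast)$ integrated against the probability measure $\mu_i$ gives
\begin{equation*}
v_i(\mu_i, s_{-i}^\ast) = \int_{S_i} v_i(s_i, s_{-i}^\ast)\, d\mu_i(s_i) \leq \int_{S_i} v_i(s^\ast)\, d\mu_i(s_i) = v_i(s^\ast).
\end{equation*}

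The main obstacle is exactly this asymmetry between the two Nash equilibrium definitions, since the CG condition quantifies over a strictly larger deviation set than the MONFG condition. The forward direction is essentially immediate from the bijection, but the backward direction needs the integration step above to lift the pure-strategy bound to arbitrary mixed deviations in $G_c$. An alternative route would be to invoke Theorem 3.4 and push CG mixed deviations into hierarchical deviations in $G_m$, but the direct argument is cleaner given that the theorem's statement only involves pure strategies on the $G_c$ side and mixed strategies on the $G_m$ side.
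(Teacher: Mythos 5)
Your proposal is correct and follows essentially the same route as the paper's proof: translate deviations through the strategy bijection $\varphi$, chain the PSE utility identity with the equilibrium inequality in each direction, and then lift the pure-deviation bound in $G_c$ to arbitrary Borel mixed deviations. The paper dispatches that last step with a one-line appeal to linearity of expectation, whereas you spell out the integration against $\mu_i$ explicitly; this is the same idea, just made more precise.
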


We define a hierarchical Nash equilibrium below, such that we can construct a similar argument to show that mixed strategy NE in continuous games necessarily correspond to a hierarchical strategy NE in pure strategy equivalent MONFGs.

\begin{definition}[Hierarchical Nash equilibria in multi-objective normal-form games]
\label{def:hierarchical-nash-equilibrium}
A hierarchical strategy profile $\mu^\ast$ is a hierarchical Nash equilibrium if,
\begin{equation*}
    u_i(\mu^\ast_i, \mu^\ast_{-i}) \geq u_i(\mu_i, \mu^\ast_{-i}),
\end{equation*}
for all players $i$ and alternative hierarchical strategies $\mu_i \in \mathcal{B}(\Delta^{k_i})$.
\end{definition}

\begin{theorem}
\label{th:ms-hs-ne-equivalence}
A mixed strategy is a Nash equilibrium in a continuous game if and only if it is a hierarchical Nash equilibrium in a pure strategy equivalent multi-objective normal-form game.
\end{theorem}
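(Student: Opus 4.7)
The plan is to reduce Theorem 3.5 to the mixed strategy equivalence already established in Theorem 3.2. Since the continuous game $G_c$ and the MONFG $G_m$ are pure strategy equivalent, by \cref{th:cg-mse-monfg} they are also mixed strategy equivalent, so we obtain functions $\psi_i: \mathcal{B}(S_i) \to \mathcal{B}(\Delta^{k_{\pi(i)}})$ that are bijective and, when combined into the product $\psi = \psi_1 \times \cdots \times \psi_n$, satisfy $v_i(\mu) = u_{\pi(i)}(\psi(\mu))$ for every joint mixed strategy $\mu \in \mathcal{B}(\mathcal{S})$. Since I may take $\pi = \mathbbm{1}$ without loss of generality, I will suppress it in what follows.

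For the forward direction, suppose $\mu^\ast \in \mathcal{B}(\mathcal{S})$ is a Nash equilibrium of $G_c$. I want to show $\psi(\mu^\ast)$ is a hierarchical Nash equilibrium of $G_m$. I will argue by contradiction: if there were a player $i$ and a hierarchical strategy $\eta_i \in \mathcal{B}(\Delta^{k_i})$ with $u_i(\eta_i, \psi(\mu^\ast)_{-i}) > u_i(\psi(\mu^\ast))$, then I pull $\eta_i$ back to the continuous game by setting $\mu'_i = \psi_i^{-1}(\eta_i) \in \mathcal{B}(S_i)$. Using the product structure of $\psi$, the image $\psi(\mu'_i, \mu^\ast_{-i})$ equals $(\eta_i, \psi(\mu^\ast)_{-i})$ componentwise, so the utility-preservation identity from \cref{def:ms-equivalence} yields
\begin{equation*}
v_i(\mu'_i, \mu^\ast_{-i}) = u_i(\eta_i, \psi(\mu^\ast)_{-i}) > u_i(\psi(\mu^\ast)) = v_i(\mu^\ast),
\end{equation*}
contradicting the Nash property of $\mu^\ast$. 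The converse direction is entirely symmetric: from a hierarchical Nash equilibrium $\mu^\ast$ in $G_m$, I would pull back along $\psi^{-1}$ and use the same identity in the opposite direction to show $\psi^{-1}(\mu^\ast)$ is a Nash equilibrium of $G_c$.

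The one place that requires care is the bookkeeping step above, where I implicitly use that changing only the $i$-th coordinate of a joint strategy in $G_c$ corresponds, under $\psi$, to changing only the $i$-th coordinate of the joint hierarchical strategy in $G_m$. This is the only step with any real content, and it follows directly from the product form $\psi = \psi_1 \times \cdots \times \psi_n$ together with the fact that $\psi_i$ depends only on the $i$-th marginal. Beyond this, the argument is a standard transport of the Nash inequality along a utility-preserving bijection, and no further measure-theoretic subtleties arise because \cref{th:cg-mse-monfg} has already done the work of constructing $\psi$ on the full space of Borel probability measures.
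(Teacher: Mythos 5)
Your proposal is correct and follows essentially the same route as the paper: both reduce to the mixed strategy equivalence of \cref{th:cg-mse-monfg} and transport the Nash inequality along the utility-preserving bijection $\psi$, the only cosmetic difference being that you phrase the forward direction as a contradiction while the paper argues by direct implication. If anything, you are slightly more careful than the paper in noting that a unilateral deviation in one game maps to a unilateral deviation in the other via the product structure of $\psi$, a step the paper leaves implicit.
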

Complete proofs for \cref{th:ps-ms-ne-equivalence,th:ms-hs-ne-equivalence} are provided in \cref{ap:mapping-ne}. These properties are of significant importance as they introduce algorithmic methods for computing Nash equilibria in either game model to the other. Moreover, as a consequence of \cref{th:ms-hs-ne-equivalence} in particular, we can state the first general result for Nash equilibria to exist in MONFGs if we allow players to assume hierarchical strategies rather than limiting them to mixed strategies only. The proof for \cref{co:hne-guarantee} is given below.
\begin{corollary}
\label{co:hne-guarantee}
Every multi-objective normal-form game with continuous utility functions has a hierarchical Nash equilibrium.
\end{corollary}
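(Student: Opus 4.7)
The plan is to chain together three results: Theorem 3.2, Glicksberg's classical existence theorem for continuous games, and Theorem 3.6. The structure of Corollary 3.7 matches this chain perfectly, so the proof proposal is essentially a routing argument through an auxiliary continuous game.

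First, I would start from an arbitrary MONFG $G_m = (N_m, \mathcal{A}, \bm{p})$ equipped with continuous utility functions $u$. By Theorem 3.2, there exists a continuous game $G_c = (N_c, \mathcal{S}, v)$ that is pure strategy equivalent to $G_m$. Inspecting the construction used in the proof of Theorem 3.2, each strategy set $S_i$ is chosen to be the simplex $\Delta^{k_i}$, which is a nonempty compact metric space, and each $v_i = u_i \circ \bm{p}_i$ is continuous (as a composition of two continuous functions), so $G_c$ indeed satisfies the conditions of Definition 2.1 for a continuous game.

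Next, I would invoke Glicksberg's theorem, cited in the paper as \citep{glicksberg1952further}, which guarantees that every continuous game possesses a mixed strategy Nash equilibrium. Applying this to $G_c$ yields a mixed strategy profile $\mu^\ast \in \mathcal{B}(\mathcal{S})$ that is a Nash equilibrium of $G_c$ in the sense of Definition 2.4.

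Finally, I would invoke Theorem 3.6: since $G_c$ and $G_m$ are pure strategy equivalent, the mixed strategy Nash equilibrium $\mu^\ast$ of $G_c$ corresponds (via the equivalence $\psi$ provided by Theorem 3.5) to a hierarchical strategy profile $\psi(\mu^\ast)$ which is a hierarchical Nash equilibrium of $G_m$ in the sense of Definition 3.4. This gives the desired existence.

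There is no serious obstacle; the content sits entirely in the earlier results. The only subtlety worth noting is verifying that the continuous game produced by the construction in Theorem 3.2 actually satisfies the hypotheses of Glicksberg's theorem (compact metric strategy spaces and continuous payoffs), which is immediate from the construction but should be stated explicitly to make the logical pipeline tight.
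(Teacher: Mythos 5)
Your proposal is correct and follows exactly the same chain as the paper's own proof: apply \cref{th:monfg-pse-cg} to obtain a pure strategy equivalent continuous game, invoke Glicksberg's existence theorem for a mixed strategy Nash equilibrium there, and map it back via \cref{th:ms-hs-ne-equivalence} to a hierarchical Nash equilibrium. Your added remark explicitly verifying that the constructed game satisfies Glicksberg's hypotheses is a small but welcome tightening of the same argument.
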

\begin{proof}
\Cref{th:monfg-pse-cg} shows that every MONFG with continuous utility functions can be mapped to a pure strategy equivalent continuous game. Furthermore, it is known that a mixed strategy Nash equilibrium exists in every continuous game \citep{glicksberg1952further}. \Cref{th:ms-hs-ne-equivalence} guarantees that a mixed strategy Nash equilibrium in a continuous game is a hierarchical Nash equilibrium in the MONFG, therefore guaranteeing a hierarchical Nash equilibrium in every MONFG.
\end{proof}


\section{Empirical Results}
\label{sec:empirical-results} 
We provide empirical evidence for the provided theorems and show that it can also be applied to compute approximate equilibria when the strategy sets do not satisfy the necessary conditions for pure strategy equivalence. We adapt the well-known fictitious play algorithm from single-objective games to multi-objective games and use it to compute pure strategy Nash equilibria in continuous games. The results empirically demonstrate the applicability of our contributions and may serve as a useful template for future applications. Our implementation is available at \url{https://github.com/wilrop/pure-strategy-equivalence}.

\subsection{Multi-Objective Fictitious Play}
\label{sec:mo-fp}
Fictitious play aims to learn strategies resulting in a Nash equilibrium through repeated plays of the game. While it is not guaranteed to converge in general-sum single-objective games, fictitious play and its extensions are widely used in practice. In \cref{alg:mo-fp}, we show an extension for fictitious play to multi-objective games. For simplicity, we consider a two-player variant but this can trivially be extended to $n$-player games. In each iteration of the algorithm, players calculate the empirical strategy of their opponent based on their history of play and compute a best response to this strategy. Players subsequently sample an action from their new strategy and update their histories.

\begin{algorithm}[t]
\caption{Multi-Objective Fictitious Play}
\label{alg:mo-fp}
    \begin{algorithmic}[1]
    \Require An MONFG $G = (N, \mathcal{A}, \bm{p})$, utility functions $u$ and maximum timestep $T$
    \Ensure A joint strategy $\delta$
    \State $\delta_1 \gets \frac{\bm{1}}{|A_1|}$
    \State $h_1 \gets \bm{0}$ \Comment{The history for player 1}
    \State $\delta_2 \gets \frac{\bm{1}}{|A_2|}$
    \State $h_2 \gets \bm{0}$
    \For{$t \in {1, \cdots, T}$}
    \State $\tilde{\delta}_2 \gets \frac{h_1}{t}$ \Comment{Compute an empirical mixed strategy}
    \State $\tilde{\delta}_1 \gets \frac{h_2}{t}$
    \State $\delta_1 \gets $ \textproc{BestResponse}$(\bm{p}_1, u_1, \tilde{\delta}_2)$ \Comment{Compute a best response}
    \State $\delta_2 \gets $ \textproc{BestResponse}$(\bm{p}_2, u_2, \tilde{\delta}_1)$
    \State $a_1 \gets a \sim \delta_1$ \Comment{Sample an action from the best response}
    \State $a_2 \gets a \sim \delta_2$
    \State $h_{1, a_2} \gets h_{1, a_2} + 1$ \Comment{Update the history}
    \State $h_{2, a_1} \gets h_{2, a_1} + 1$
    \EndFor
    \State \Return $\delta$
    \end{algorithmic}
\end{algorithm}

Recent work has studied an adaptation of fictitious play to continuous games \citep{ganzfried2021algorithm}. In their algorithm, a growing array of past strategies is kept to later compute a best response to, which imposes a significant memory requirement. A key advantage of our approach is that it only requires an array of fixed length, i.e., one entry per action, where a counter is incremented each time an action is played. The empirical mixed strategy of the opponent is then calculated by taking the relative frequency of each action. A limitation of this approach is that it can only learn pure-strategy equilibria from the continuous game. 


The fictitious play algorithm shown above appears identical to the original fictitious play algorithm. The exception, however, lies in the best response computation steps. In single-objective games, this can be done efficiently by selecting the action with the highest expected returns, i.e.
\begin{equation}
    BR(A_i, s_{-i}, p_i) = \argmax_{a_i \in A_i}p_i(a_i, s_{-i}).
\end{equation}

In multi-objective games, this approach can only be guaranteed to return a correct best response when employing a quasiconvex utility function \citep{ropke2022nash}. In general MONFGs, the best response can be a mixed strategy and thus requires executing an optimisation subroutine to find the strategy generating the maximum utility. As a best response needs to be a global maximum, this requires the use of a global optimisation algorithm. Under specific utility functions or when approximate best responses suffice, a local optimiser could also be used. 

\subsection{Polynomial Game}
\label{sec:polynomial-game}
Polynomial games are a subset of continuous games, where utility functions are guaranteed to be polynomial functions of the player strategies \citep{stein2008separable,stein2011correlated}. We demonstrate that such games can also be represented as an MONFG and may be solved without employing any continuous game or polynomial game specific machinery. We cover a simple example as described by Parrilo \cite{parrilo2006polynomial}. 

Consider a zero-sum game where both players select a strategy from the interval $[-1, 1]$. The utility function for player one is defined as,
\begin{equation}
\label{eq:polynomial-utility}
    v_{1}(x, y) = 2xy^2 - x^2 - y,
\end{equation}
with $x$ the strategy selected by player one and $y$ the strategy of player two. As the game is zero-sum, player two's utility is given by $v_{2}(x, y) = -v_{1}(x, y)$. The utility functions used in the game guarantee the existence of a unique Nash equilibrium in pure strategies where $x^\ast = 0.397$ and $y^\ast = 0.630$.

As the strategy sets are line segments, and thus are 1-simplices, a pure strategy equivalent multi-objective game is guaranteed to exist. To complete the transformation from the polynomial game to a multi-objective game, a strategy bijection $\varphi_i: S_i \to \Delta^{k_i}$ is required for each player $i$. The strategy bijection for both players is given by,
\begin{equation}
\label{eq:strat-bij}
    \varphi_i(s_i) = \left(\frac{s_i - s_{i, \min}}{s_{i, \max} - s_{i, \min}}, 1 - \frac{s_i - s_{i, \min}}{s_{i, \max} - s_{i, \min}}\right),
\end{equation}
where $s_{i, \min} = -1$ and $s_{i, \max} = 1$ for both players. The inverse strategy bijection is given by,
\begin{equation}
\label{eq:inv-strat-bij}
    \varphi^{-1}_i(\delta_i) = s_{i, \min} + \delta_{i,0} \cdot \left(s_{i, \max} - s_{i, \min}\right) .
\end{equation}

The final multi-objective game thus has two players with two actions each and corresponding identity game payoffs. Furthermore, the utility functions for both players are $u_{i} = v_{i} \circ \varphi^{-1}_i$. Because the original utility functions $v_{1}$ and $v_{2}$ guarantee a pure strategy Nash equilibrium in the continuous game, fictitious play is well suited to find the mixed strategy equilibrium in the MONFG. 

We execute the fictitious play algorithm for 200 iterations on the constructed multi-objective game and repeat this for 1000 trials. \Cref{fig:polynomial-game} shows the learned strategies over time, with the shaded area denoting the standard variation at that time. We illustrate the Nash equilibrium $(0.397, 0.630)$ with dotted lines. It is clear that our algorithm learns the equilibrium after approximately 100 iterations and is able to keep improving its strategies closer to the exact equilibrium over time.

\begin{figure}[]
    \centering
    \includegraphics[width=0.75\linewidth]{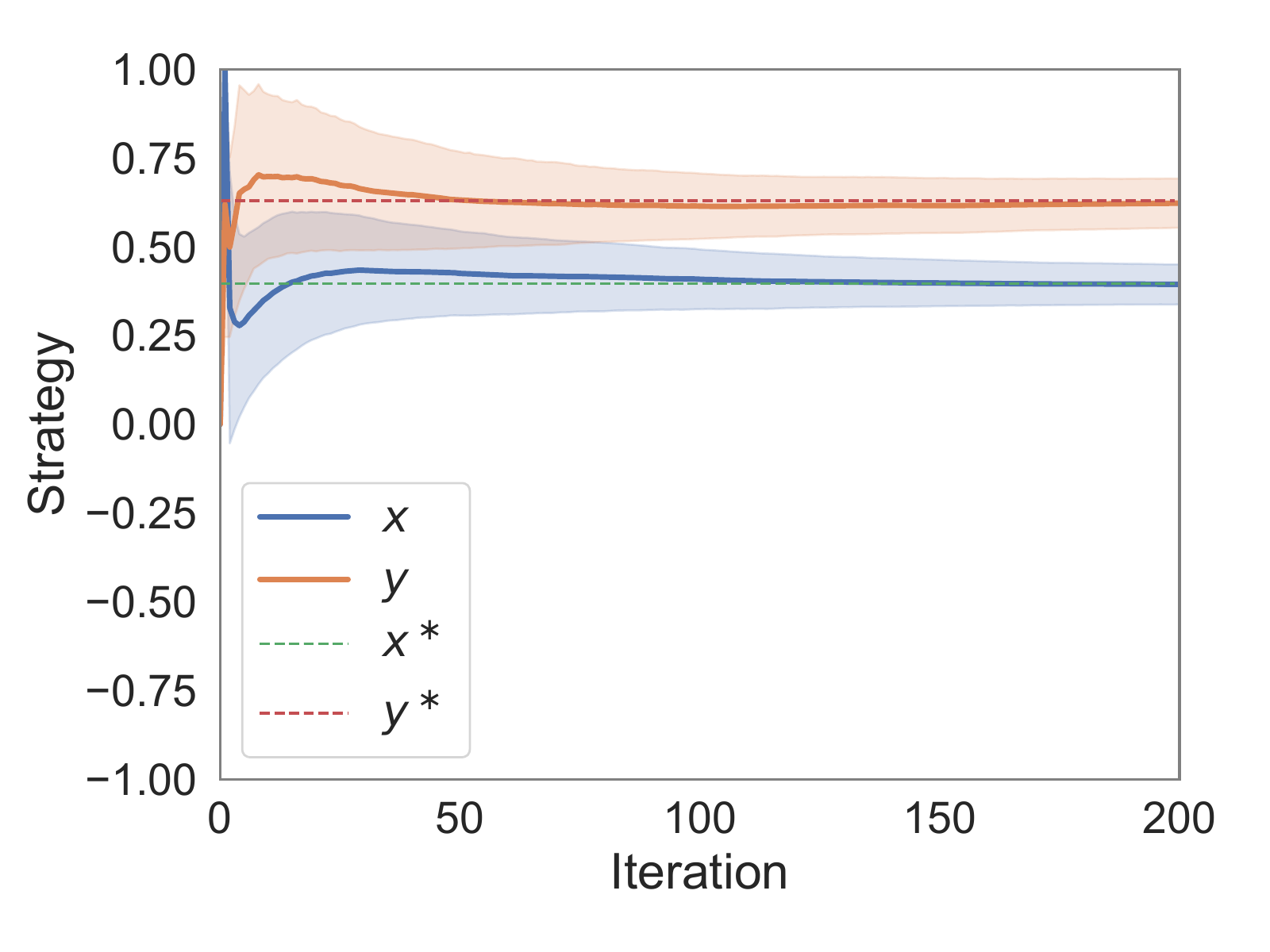}
    \caption{Learning curves for the polynomial game.}
    \label{fig:polynomial-game}
\end{figure}

\subsection{Bertrand Price Game}
\label{sec:bertrand-price-game}
\cref{th:cg-pse-monfg} states that a pure strategy equivalent MONFG is only guaranteed to exist for continuous games whose strategy spaces are convex subsets of an Euclidean space. We demonstrate that pure strategy equivalence can still be applied when this condition is not met by approximating the continuous game. We illustrate this using the Bertrand price game characterised by Judd et al. \cite{judd2012finding}. 

Bertrand price games have been extensively studied as an economic model for determining prices in competitive settings. In this example, we consider two firms, $x$ and $y$, which respectively produce a different good for price $p_x$ and $p_y$. There are three types of customers, which have a distinct demand for both goods. The first type of customer has linear demand curves $d_{x,1}$ and $d_{y,1}$ and only wants the good from firm $x$,
\begin{align}
\label{eq:demand-1}
    d_{x,1}(p_x, p_y) &= a - p_x & d_{y,1}(p_x, p_y) &= 0,
\end{align}
with $a$ signifying all factors, other than price, which influence the demand. The demand function for the third type of customer is defined analogously for the good of firm $y$,
\begin{align}
\label{eq:demand-3}
    d_{x,3}(p_x, p_y) &= 0 & d_{y,3}(p_x, p_y) &= a - p_y.
\end{align}
Finally, the second type of customer has a demand for both goods,
\begin{align}
\label{eq:demand-2}
    d_{x,2}(p_x, p_y) &= n \cdot p_x^{-\sigma}\left(p_x^{1-\sigma} + p_y^{1-\sigma}\right)^{\left(\gamma - \sigma\right)/\left(-1 + \sigma\right)} \\
    d_{y,2}(p_x, p_y) &= n \cdot p_y^{-\sigma}\left(p_y^{1-\sigma} + p_x^{1-\sigma}\right)^{\left(\gamma - \sigma\right)/\left(-1 + \sigma\right)}.
\end{align}
with $n$ the number of type two customers, $\sigma$ the elasticity of substitution between $x$ and $y$ and $\gamma$ the elasticity of demand for the composite good. The total demand for each good, respectively $d_x$ and $d_y$, is given by summing the individual demands for each type. Finally, let $m$ be the unit cost of production for each firm, then the profit for both firms is defined as,
\begin{align}
\label{eq:profit-function}
    r_{x}(p_x, p_y) &= \left(p_x - m\right) \cdot d_x(p_x, p_y) \\
    r_{y}(p_x, p_y) &= \left(p_y - m\right) \cdot d_y(p_x, p_y).
\end{align}

The range of possible prices considered in the game is the open interval $(0, +\infty)$. As such, strategy spaces in the continuous game are non-compact, thus violating a necessary condition for pure strategy equivalence. We can resolve this, however, by making compact convex approximations of the strategy spaces and using these instead. We do this by constraining prices to be in a closed interval $[p_{\min}, p_{\max}]$, which ensures that strategy sets are 1-simplices as in the previous example. Because of this approximation, we may reuse the same strategy bijection as defined in \cref{eq:strat-bij,eq:inv-strat-bij}. Note that approximating the continuous game by altering strategy sets may remove existing equilibria from reach or introduce new ones. In this particular example, as we are both lower and upper bounding the strategy sets, it is possible that an equilibrium falls outside of the bounds and a new equilibrium is created in the MONFG which is not an equilibrium in the original game.

For the following experiments, we define $\sigma = 3, \gamma = 2, n = 2700, m = 1$ and $a = 50$. With these parameters, the price game has two distinct Nash equilibria, shown in \cref{tab:bertrand-eq}. 

\subsubsection{Suitable Approximation}
To give both equilibria a chance of being selected, we set $p_{\min} = 1$ and $p_{\max} = 30$. Every execution of the fictitious play algorithm is run for 200 iterations and results are averaged over 1000 trials as in the previous section.

\begin{table}[t]
\centering
\begin{tabular}{l|l|l|l}
$p_x$  & $p_y$                      & $r_x$                       & $r_y$ \\ \hline
2.168  & 25.157                     & \textbf{724.337}                      & 608.981     \\
25.157 & 2.168                      & 608.981                      & \textbf{724.337}    
\end{tabular}
\caption{The Nash equilibrium strategies and their profits. The highest profit for firm $x$ and $y$ are highlighted.}
\label{tab:bertrand-eq}
\end{table}

In \cref{fig:bertrand-price-game-full}, we show the trajectories leading to the two equilibria. In earlier episodes, the trajectories are non-smooth and show high standard deviation, as the best response computation from a limited history of play leads agents to change strategies rapidly. However, once beliefs converge after approximately 150 episodes, a Nash equilibrium is consistently played. We also find that the learning trajectories for both equilibria are similar, showing that the individual trajectory that is followed is mostly determined by randomisation early on in the learning process. These results demonstrate that multi-objective algorithms can be applied even to approximations of continuous games, given that these approximations sufficiently capture the original game. 

\begin{figure}[t]
    \centering
    \begin{subfigure}[b]{0.45\linewidth}
        \includegraphics[width=\linewidth]{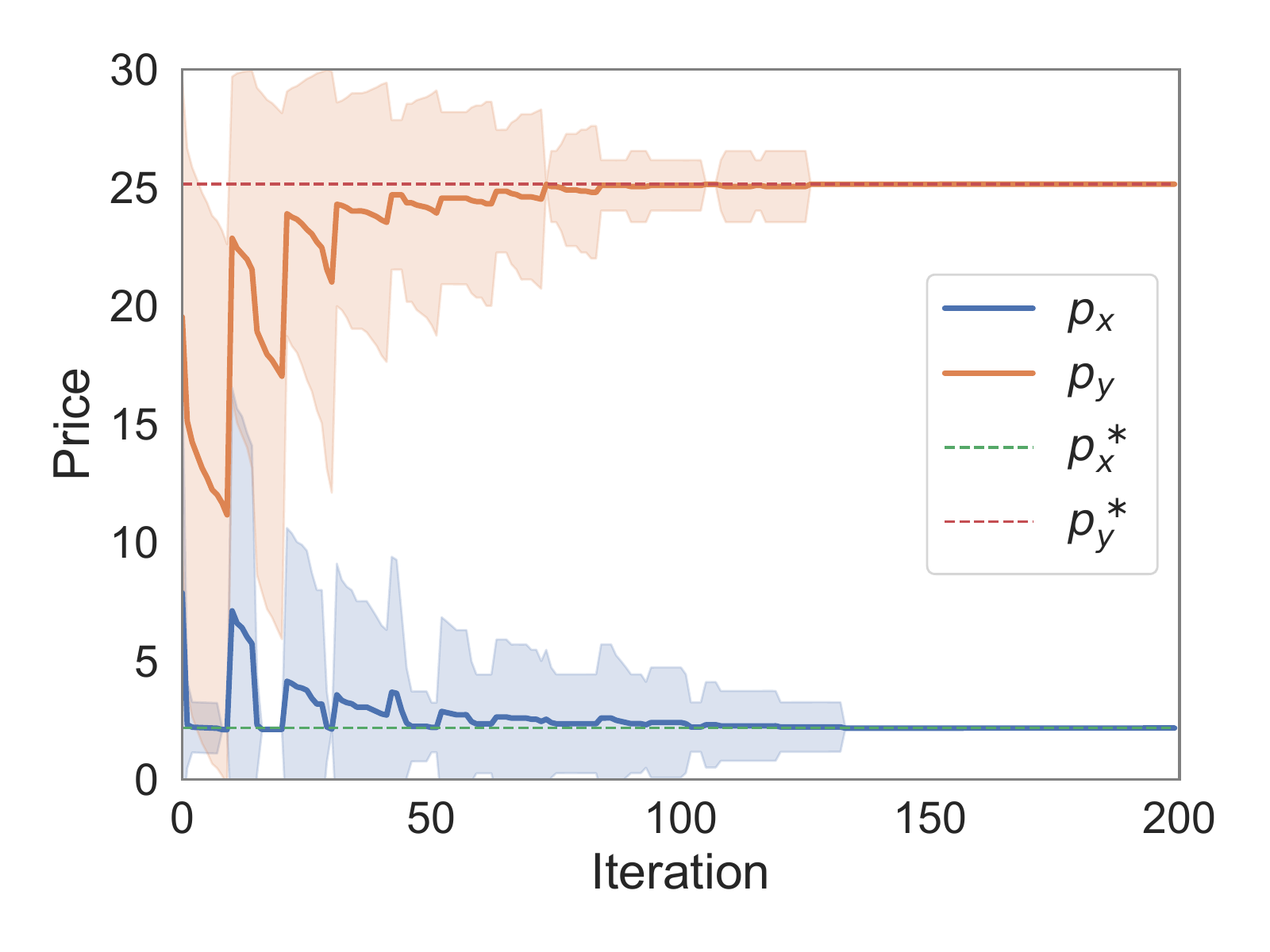}
        \caption{Learning curves for the equilibrium at $(2.168, 25.157)$.}
        \label{fig:eq3}
    \end{subfigure}%
    \quad
    \begin{subfigure}[b]{0.45\linewidth}
        \includegraphics[width=\linewidth]{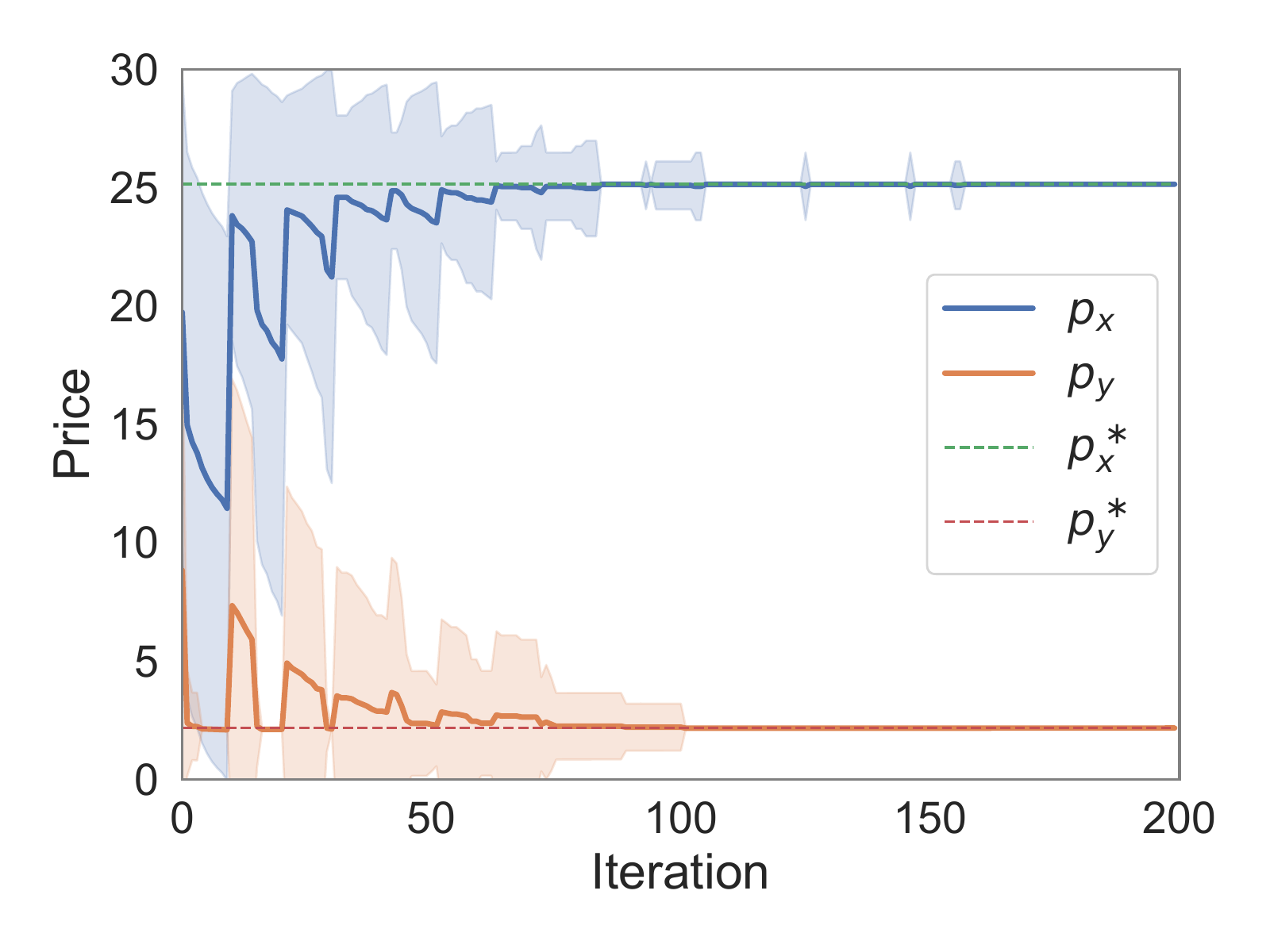}
        \caption{Learning curves for the equilibrium at $(25.157, 2.168)$.}
        \label{fig:eq4}
    \end{subfigure}%
    \caption{Learning curves for the Bertrand price game in the interval $[1, 30]$.}
    \label{fig:bertrand-price-game-full}
\end{figure}

\subsubsection{Unsuitable Approximation}
Next, we consider what happens when an unsuitable approximation of the strategy sets is used. Specifically, we raise the minimum price to $4$, which renders both equilibria from \cref{tab:bertrand-eq} impossible. Intuitively, these equilibria had one firm that opted for a mass-market strategy with lower prices and another that opted for a niche strategy with higher prices. By raising the minimum price, we render this mass-market strategy impossible. We show the resulting learning trajectories for this experiment in \cref{fig:bertrand-price-game-restricted}.

We find that throughout all trials the firms rapidly converge to the joint strategy $(22.987, 22.987)$, which leads to a profit of $673.38$ for both. In fact, this joint strategy is a Nash equilibrium in the MONFG but not a Nash equilibrium of the original continuous game, thus demonstrating the mentioned limitations of the approximation technique. Interestingly, the Nash equilibrium in the approximate MONFG leads to a joint strategy with higher social welfare when considering both the total sum of profits as well as the maximum lowest profit. Specifically, it returns a total profit of $1346.754$, while both equilibria from \cref{tab:bertrand-eq} lead to a total profit of $1333.318$ and has a higher lowest profit. As such, even when using insufficient approximations for the original game, our contributions may result in interesting solutions from, e.g., a mechanism design perspective. 

We note that designing appropriate approximations for arbitrary continuous games is a nontrivial task. For most interesting applications, suitable convex compact approximations will not be given and thus require leveraging domain knowledge or post-processing to confirm the retrieved solution in the original game.

\begin{figure}[]
    \centering
    \includegraphics[width=0.75\linewidth]{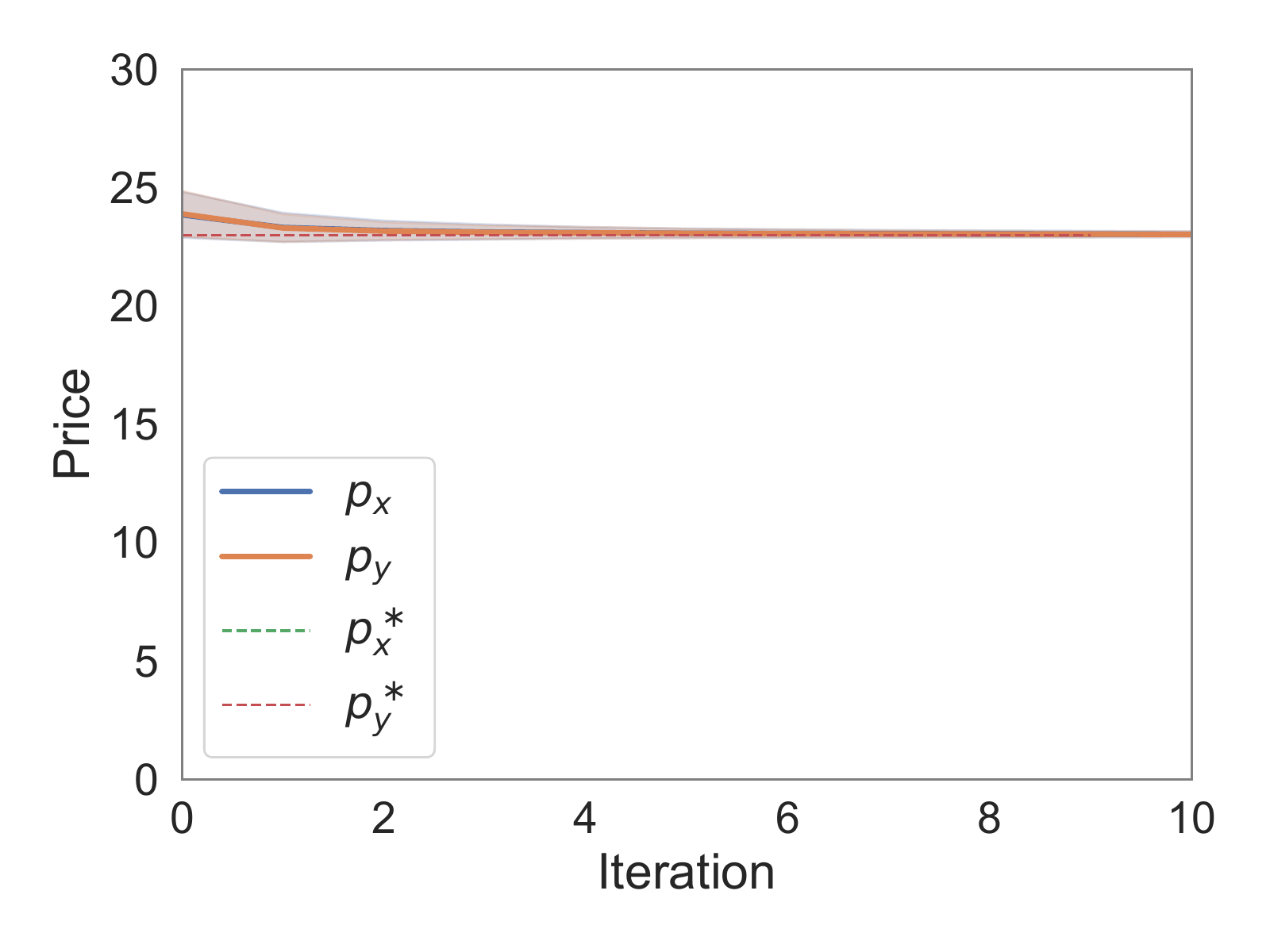}
    \caption{Learning curves for the Bertrand price game in the interval $[4, 30]$.}
    \label{fig:bertrand-price-game-restricted}
\end{figure}

\section{Related Work}
\label{sec:related-work}
\balance
Multi-objective games were first introduced by Blackwell \citep{blackwell1954analog} and have since been studied broadly. One solution concept that is often considered are Pareto-Nash equilibria, which take a utility agnostic approach and are thus defined over the vectorial payoffs directly \citep{ismaili2018existence}. When taking a utility-based approach \citep{roijers2017multiobjective}, it has been shown that Nash equilibria need not exist \citep{radulescu2020utilitybased}. Follow up work showed that assuming only quasiconcave utility functions \citep{ropke2022nash} is a sufficient condition to guarantee existence again. From a computational perspective, reinforcement learning algorithms and additional techniques such as opponent modelling \citep{zhang2020opponent,radulescu2022opponent} and communication \citep{ropke2022preference} have recently been successfully explored.

The second game model we considered in this work are continuous games. General existence guarantees are known in these games, with for example work on Nash equilibria \citep{glicksberg1952further} and correlated equilibria \citep{hart1989existence}. From a computational perspective, both fictitious play \citep{ganzfried2021algorithm} and no-regret learning have been explored \citep{hsieh2021adaptive}, with the latter also obtaining strong convergence guarantees. We note that while we adhere to the definition of a continuous game by Stein et al. \citep{stein2008separable}, there exist other definitions for continuous games which place different assumptions on the strategy sets or utility functions \citep{ganzfried2021algorithm,hsieh2021adaptive,adam2021double}. Finally, polynomial games, a subset of continuous games, have been covered in detail with theoretical and algorithmic results for their Nash equilibria and correlated equilibria \citep{stein2008separable,stein2011correlated}. 

Our work is related to other equivalence notions in game theory. The first notable example of such an equivalence notion is strategic equivalence \citep{maschler2013game}. An advantage of strategic equivalence is that Nash equilibria are preserved, thus being a useful construct for computing Nash equilibria in games. For example, the Nash equilibria of an otherwise intractable game might be computed by constructing a strategically equivalent zero-sum game for which efficient solving methods do exist \citep{heyman2019computation}. 

Pure strategy equivalence, as defined in \cref{sec:pure-strategy-equivalence}, is most closely related to the concept of a game isomorphism which defines two games to be isomorphic when there exists a mapping from one to the other \citep{gabarro2011complexity}. Two variants of a game isomorphism are defined, namely a strong and a weak isomorphism, with a strong isomorphism preserving all Nash equilibria and a weak isomorphism preserving only the pure strategy Nash equilibria.


\section{Conclusion}
\label{sec:conclusion}
We contribute a novel equivalence class, called pure strategy equivalence, between continuous games and multi-objective normal-form games. We show that for every continuous game whose strategy spaces are convex subsets of an Euclidean space, a pure strategy equivalent MONFG can be constructed and vice versa. Moreover, this equivalence entails the persistence of Nash equilibria. We demonstrate the applicability of pure strategy equivalence by learning Nash equilibria in two continuous games utilizing a multi-objective fictitious play algorithm.

The equivalence notion introduces a range of new theoretical contributions and computational approaches for both models. For the theoretical aspect, it is known that other game models which allow for more complex interactions, such as Bayesian games and extensive form games, can be reduced to normal-form games \citep{maschler2013game}. Formulating a bridge between continuous games and multi-objective games thus opens the possibility for additional equivalence results between these games with an infinite number of pure strategies and the related multi-objective variant. From an algorithmic perspective, we expect that this will allow continuous games to be solved more efficiently, as the tabular structure of MONFGs can be more appropriate for computational approaches. For future work, we aim to evaluate this on larger continuous games with more complex structures. Finally, as the equivalence notion is not unique, an interesting question remains how to find the best pure strategy equivalent game for any given game.

\begin{acks}
We are grateful to Raphael Avalos for helpful discussions during the early stages of this work. WR and RR are supported by the Research Foundation – Flanders (FWO), grant numbers 1197622N and 1286223N and CG is supported by the Marie-Curie grant 101063180. This research was supported by funding from the Flemish Government under the ``Onderzoeksprogramma Artifici\"{e}le Intelligentie (AI) Vlaanderen'' program.
\end{acks}



\bibliographystyle{ACM-Reference-Format} 
\bibliography{bibliography}

\newpage
\appendix

\section{Measure Theory}
\label{ap:measure-theory}
In this work, and in \cref{sec:mixed-strategy-equivalence,sec:mapping-of-ne} in particular, we make use of results from the mathematical field of measure theory. To make this work relatively self-contained, we provide a short supplementary section on elementary results employed in our definitions and proofs. For a more in-depth treatment of this subject we refer to the textbook of Capi\'{n}ski and Kopp \cite{capinski2004measure}.

\subsection{Measures}
A measure generalises the concepts of length and volume. Informally, a measure is a function which maps subsets of some parent set to a non-negative real number, corresponding to its ``mass''. 

Let $X$ be a set. A $\sigma$-algebra on this set is a nonempty collection $\mathcal{F}$ of subsets of $X$ closed under complement, countable unions and countable intersections. We call the pair $(X, \mathcal{F})$ a measurable space. A function $\mu: \mathcal{F} \to [0, +\infty]$ is called a \emph{measure} if, 

\begin{enumerate}
    \item $\mu(\emptyset) = 0$;
    \item $\mu \left(\bigcup^\infty_{i=1} B_i\right) = \sum_{i=1}^n \mu \left(B_i \right)$ for all $B_i \in \mathcal{F}$ with $B_i \cap B_j = \emptyset, i \neq j$.
\end{enumerate}

Given a measurable space $(X, \mathcal{F})$ and a measure $\mu$ on $(X, \mathcal{F})$, $(X, \mathcal{F}, \mu)$ is called a measure space.

To illustrate this concept, consider a specific example called the Dirac measure. A Dirac measure assigns a value of 1 to every subset which contains a predefined element $x$ and a value of 0 to all others. This measure can be used to define the Dirac delta function, a useful tool in physics. Formally, a Dirac measure $\mu_x$ on the measurable space $(X, \mathcal{F})$ for a given $x \in X$ is defined as,
\begin{equation*}
    \mu_x(B) = \begin{cases}
0, \quad x \notin B\\
1, \quad x \in B
\end{cases}
\quad \text{for all } B \in \mathcal{F} 
\end{equation*}

\subsection{Borel Probability Measures}
Roughly speaking, a topological space is a set together with a collection of subsets that have been declared to be open. A Borel set is any set in a topological space that can be formed from open sets by taking countable unions, countable intersections and relative complements. The set of all Borel sets on $X$ is a $\sigma$-algebra, called the Borel $\sigma$-algebra, and is the smallest $\sigma$-algebra containing all open sets in $X$. As such, given a set $X$ together with a Borel $\sigma$-algebra $\mathcal{F}$, we may define the measure space $(X, \mathcal{F})$. A measure $\mu$ on this space is called a Borel measure.

Let $(X, \mathcal{F})$ be a measure space and $\mu$ a measure on this space. If $\mu(X) = 1$, $\mu$ is called a probability measure. Naturally, if $\mathcal{F}$ is the Borel $\sigma$-algebra, $\mu$ is also known as a Borel probability measure.

In continuous games, the set of actions is a nonempty compact metric space. As such, when considering mixed strategies, one cannot simply assign an individual probability to each action. Rather, we require measures to be defined over the open sets in the action space to assign probabilities. Because the Borel $\sigma$-algebra is the smallest $\sigma$-algebra which contains all open sets in the action space, it is standard to define mixed strategies to be Borel probability measures defined over this space. The set of all mixed strategies is further naturally defined as the set of all Borel probability measures.

\subsection{Pushforward Measure}
A homeomorphism $\varphi:X\to Y$ between topological spaces is a continuous bijection with a continuous inverse; in particular, $\varphi^{-1}(B)$ and $\varphi(A)$ are open whenever $B\subseteq Y,A\subseteq X$ are open.
Given a Borel measure $\mu$ on $X$, the \textit{pushforward measure} $\varphi_*(\mu)$ on $Y$ is defined as
\begin{equation*}
    \varphi_{*}\left(\mu\right)\left(B\right)=\mu\left(\varphi^{-1}\left(B\right)\right)
\end{equation*}
for all sets $B$ in the Borel $\sigma$-algebra $\mathcal{F}_{Y}$ of $Y$. It can be checked that this indeed defines a Borel measure on $Y$.
We need the following folklore result, for which we provide a proof for completeness.
\begin{lemma}
\label{lemma:pushforward-bijective}
Let $X,Y$ be topological spaces with Borel $\sigma$-algebras $\mathcal{F}_X, \mathcal{F}_Y$ and sets of Borel probability measures $\mathcal{B}(X),\mathcal{B}(Y)$ respectively. If $\varphi:X\to Y$ is a homeomorphism, then the map
\[
\psi:\mathcal{B}(X)\to \mathcal{B}(Y): \mu \mapsto \varphi_*(\mu)
\] is a bijection.
\end{lemma}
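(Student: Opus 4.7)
The plan is to exhibit an explicit two-sided inverse for $\psi$ by observing that the inverse homeomorphism $\varphi^{-1}: Y \to X$ gives rise to its own pushforward map $\psi': \mathcal{B}(Y) \to \mathcal{B}(X)$ defined by $\psi'(\nu) = (\varphi^{-1})_*(\nu)$. I would then show that $\psi' \circ \psi = \mathrm{id}_{\mathcal{B}(X)}$ and $\psi \circ \psi' = \mathrm{id}_{\mathcal{B}(Y)}$, which immediately yields bijectivity.

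First I would verify well-definedness on both sides. Given $\mu \in \mathcal{B}(X)$ and a Borel set $B \subseteq Y$, the preimage $\varphi^{-1}(B)$ lies in the Borel $\sigma$-algebra of $X$ because $\varphi$ is continuous: the family of sets whose preimages are Borel is closed under complements and countable unions, and it contains every open set since continuous maps pull open sets back to open sets, so it must contain the entire Borel $\sigma$-algebra. Hence $\varphi_*(\mu)(B) = \mu(\varphi^{-1}(B))$ is defined. Countable additivity follows from $\varphi^{-1}$ preserving disjoint countable unions as a set map, and the normalisation $\varphi_*(\mu)(Y) = \mu(\varphi^{-1}(Y)) = \mu(X) = 1$ makes it a Borel probability measure, so $\varphi_*(\mu) \in \mathcal{B}(Y)$. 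The same argument, now using continuity of $\varphi^{-1}$, shows $\psi'$ is well-defined.

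The compositional identities then follow from functoriality of the pushforward, $(g \circ f)_* = g_* \circ f_*$, which is itself a direct consequence of $(g \circ f)^{-1} = f^{-1} \circ g^{-1}$ applied inside the definition. Combining this with $\varphi \circ \varphi^{-1} = \mathrm{id}_Y$ and $\varphi^{-1} \circ \varphi = \mathrm{id}_X$, and noting that $(\mathrm{id})_*$ acts as the identity on measures, I get $\psi \circ \psi' = (\varphi \circ \varphi^{-1})_* = \mathrm{id}_{\mathcal{B}(Y)}$ and symmetrically $\psi' \circ \psi = \mathrm{id}_{\mathcal{B}(X)}$. If preferred, this can also be checked pointwise: $\psi'(\psi(\mu))(A) = \psi(\mu)((\varphi^{-1})^{-1}(A)) = \mu(\varphi^{-1}(\varphi(A))) = \mu(A)$ for every Borel $A \subseteq X$, using that $\varphi$ is a bijection on the underlying sets.

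The hard part is really just a conceptual point rather than a technical one: being a homeomorphism, as opposed to a mere continuous bijection, is essential throughout. We need both $\varphi$ and $\varphi^{-1}$ to be Borel measurable so that each pushforward construction lands in the appropriate Borel measure space; a continuous bijection with discontinuous inverse could send a Borel set to a non-Borel set, rendering $\psi'$ ill-defined. Since the hypothesis of the lemma gives us continuity of $\varphi^{-1}$ for free, this obstacle disappears and the proof reduces to the routine functoriality argument above.
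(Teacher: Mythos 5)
Your proof is correct, but it takes a genuinely different route from the paper's. The paper argues injectivity and surjectivity separately: it observes that $\{\varphi^{-1}(B) : B \subseteq Y \text{ open}\}$ coincides with the open sets of $X$ and hence generates $\mathcal{F}_X$, so that two Borel probability measures agreeing on this family must be equal (injectivity), and then obtains surjectivity by prescribing the values of a preimage measure on that same generating family. You instead exhibit an explicit two-sided inverse $\psi' = (\varphi^{-1})_*$ and reduce everything to functoriality of the pushforward, $(g \circ f)_* = g_* \circ f_*$ together with $(\mathrm{id})_* = \mathrm{id}$. Your approach buys something real: it sidesteps the measure-theoretic subtleties the paper leans on, namely the (implicit) $\pi$-$\lambda$ argument needed to conclude that agreement on open sets forces equality of measures, and the question of whether a set function prescribed only on a generating family actually extends to a measure --- in your version the candidate inverse image is a pushforward by construction, so it is automatically a Borel probability measure. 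The paper's approach, in exchange, makes explicit the role of the generating set, which is reused conceptually elsewhere in the appendix. Your closing remark correctly isolates where the homeomorphism hypothesis (as opposed to mere continuous bijectivity) is actually needed: Borel measurability of $\varphi^{-1}$ so that $\psi'$ lands in $\mathcal{B}(X)$.
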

\begin{proof}
Let us first remark that 
\begin{equation}
\label{eq:gen}
\{\varphi^{-1}(B):B \subseteq Y\text{ open}\}=\{A:A \subseteq X\text{ open}\}
\end{equation}
because $\varphi$ is a homeomorphism. Therefore, the set above generates $\mathcal{F}_X$, and a Borel measure on $X$ is uniquely determined via the values it takes on that set. 

To see that $\psi$ is injective, let $\mu,\mu'\in \mathcal{B}(X)$. Then
\begin{align*}
    & \psi(\mu) = \psi(\mu') \\
    \implies & \varphi_*(\mu) = \varphi_{*}(\mu') \\
    \implies & \mu\left(\varphi^{-1}\left(B\right)\right) = \mu'_i\left(\varphi^{-1}\left(B\right)\right) \forall B\subseteq Y\text{ open} \\
    \implies &  \mu = \mu'.
\end{align*}
The last step uses again that (\ref{eq:gen}) is a generating set for $\mathcal{F}_X$, which implies that any two Borel measures that take the same value on this set, must be equal.

To see that $\psi$ is also surjective, let $\mu_Y \in \mathcal{B}(Y)$ and we simply define $\mu_X\in \mathcal{B}(X)$ on the generating set (\ref{eq:gen}) again:
\begin{equation*}
   \mu_X(\varphi^{-1}(B))= \mu_Y(B), \quad \forall B \subseteq Y \text{ open}.
\end{equation*}
By construction, $\psi(\mu_X)=\mu_Y$ since for all open $B\subseteq Y$,
\[
\psi(\mu_X)(B)=\mu_X(\varphi^{-1}(B))= \mu_Y(B),
\]
so since the measures take the same values on a generating set for $\mathcal{F}_Y$, they take the same values on $\mathcal{F}_Y$.
\end{proof}

\section{Identity Game}
\label{ap:identity-game}
In this section, we provide a formal proof for \cref{lemma:identity-game}. We restate the lemma below. 

\begin{lemma1}[Identity Game]
For any finite set of players and finite sets of pure strategies, there exists a set of payoff functions $\bm{p}$ such that for each player $i$, $\bm{p}_i(\delta) = \delta$. 
\end{lemma1}

\begin{proof}
We show that an identity game can always be constructed for a finite set of players $N$ with a finite set of pure strategies $A_i \in \mathcal{A}$ for each player $i \in N$. 

We first specify the description of strategies $s_i$ in the identity game as it determines the required payoff. Let $m_i \coloneqq |A_i|$. Each player's strategy is a probability vector of length $|A_i|$,
\begin{equation}
    \forall i \in N: \delta_{i} = \left(P\left(a_{i, 1} | \delta_{i}\right), \dots, P\left(a_{i, m_i} | \delta_{i} \right) \right).
\end{equation}

Let $s$ be a joint strategy with an individual strategy $\delta_i$ for each player $i$. The description of $s$ is the concatenation of all individual strategies $\delta_i$ in ascending order. In games where each player has a finite number of pure strategies, $\delta$ must necessarily have a finite description length. Specifically, because each individual strategy has length $m_i$ and the joint strategy $\delta$ combines all such strategies in one vector, we find that
\begin{equation}
    |\delta| = \sum_{i = 1}^{n} m_i.
\end{equation}

We denote the probability of player $i$ playing action $a_{i,l}$ in in the joint-strategy $\delta$ as $\delta_{i,l}$ so that $\delta_{i,l} = P(a_{i,l} | \delta)$.

An identity game ensures that for all players $i$, $\bm{p}_i(\delta) = \delta$. As such, we need only define one payoff function, which is shared by all players. Furthermore, observe that the length of each payoff vector is equal to the length of the joint strategy,
\begin{equation}
    \forall i \in N, \forall a \in \mathcal{A}: \bm{p}_i(a) \in \mathbb{R}^{|\delta|}.
\end{equation}

A pure joint strategy $a$ is the special case of a joint strategy $\delta$ where each player $i$ deterministically plays an action $a_i \in A_i$. We denote this by $\delta_a$ whenever we want to specify the full joint strategy notation. As a last remark on notation, we define $\delta_i(a_i)$ to be the probability of playing $a_i \in A_i$ under $\delta_i$.

We now construct the payoffs for the identity game. Let the finite set of payoffs for pure strategies equal the joint strategy vector, i.e. 

\begin{align}
\label{eq:identity-eq}
\bm{p}_i(a) & = \delta_a \\
    & = \left(P \left( a_{1,1} | \delta_a \right), \cdots , P \left( a_{n, {m_n}} | \delta_a \right) \right).
\end{align}
We show that this ensures $\bm{p}_i(\delta) = \delta$. The expected payoff vector of a mixed joint strategy $\delta$ is calculated as follows (see Eq. \ref{eq:exp-vec-payoff}),
\begin{align*}
    \bm{p}_i(\delta) & = \sum_{a \in \mathcal{A}}\bm{p}_i(a)\prod_{j = 1}^n \delta_j(a_j).
\end{align*}
Furthermore, because strategies are independent,
\begin{equation*}
    P(a | \delta) = \prod_{j = 1}^n \delta_j(a_j).
\end{equation*}
By the law of total probability we can thus say that,
\begin{align}
    \bm{p}_i(\delta) & = \sum_{a \in \mathcal{A}}\bm{p}_i(a)\prod_{j = 1}^n \delta_j(a_j) \\
    & = \sum_{a \in \mathcal{A}} \left(P \left( a_{1,1} | a \right), \cdots , P \left( a_{n, m_n} | a \right) \right) P(a | \delta) \\
    & = \sum_{a \in \mathcal{A}} \left(P \left( a_{1,1} | a \cap \delta \right), \cdots , P \left( a_{n, m_n} | a \cap \delta \right) \right)P(a | \delta) \label{eq:conditional-prob} \\
    & = \sum_{a \in \mathcal{A}} \left(P \left( a_{1,1} | a \cap \delta  \right) P(a | \delta), \cdots , P \left( a_{n, m_n} | a \cap \delta  \right) P(a | \delta) \right) \\
    & = \left(P \left( a_{1,1} | \delta \right), \cdots , P \left( a_{n, m_n} | \delta \right) \right) \\
    & = \delta.
\end{align}

Note that \cref{eq:conditional-prob} holds because the probability of playing any action $a_{i,l}$ is uniquely defined under the joint action $a$ and is independent from another joint strategy $\delta$.
\end{proof}

\section{Pure Strategy Equivalence}
\label{ap:pure-strategy-equivalence}
We provide explicit construction methods that transform a game from one class to the other and demonstrate that such constructions are not necessarily unique. In addition, we present a standard approach for the construction of the strategy bijections between the continuous game and MONFG. Finally, we briefly discuss the potential pitfalls of this approach and how future work could address these challenges.

\subsection{Constructing Equivalent Games}
\label{ap:constructing-games}
The proofs presented in \cref{sec:pure-strategy-equivalence} can be formalised to compute pure strategy equivalent games. We first show the construction of a pure strategy equivalent continuous game starting from an MONFG in \cref{alg:cg-construction}. Observe that because both the player bijection and strategy bijection are identity functions, the construction can be performed efficiently.

\begin{algorithm}[th]
\caption{The continuous game construction from an MONFG.}
\label{alg:cg-construction}
    \begin{algorithmic}[1]
    \Require An MONFG $G_m = (N_m, \mathcal{A}, \bm{p})$ and utility functions $u$
    \Ensure A continuous game $G_c = (N_c, \mathcal{S}, v)$
    \State $N_c \gets N_m$
    \State $\mathcal{S} \gets \Delta^{k_1} \times \dots \times \Delta^{k_n}$
    \State $v \gets (u_{1} \circ \bm{p}_1, \dots, u_{n} \circ \bm{p}_n)$
    \State $G_c \gets (N_c, \mathcal{S}, v)$
    \State \Return $G_c$
    \end{algorithmic}
\end{algorithm}

The construction from a continuous game to an MONFG is shown in \cref{alg:monfg-construction}. Here, we explicitly require strategy bijections $\varphi_i$ to be given which map strategy sets to simplices and are used in constructing the utility functions. The efficiency of computing the utility for a given strategy in the MONFG is then dependent on the efficiency of computing the inverse of the strategy bijection. 

\begin{algorithm}[th]
\caption{The MONFG construction from a continuous game.}
\label{alg:monfg-construction}
    \begin{algorithmic}[1]
    \Require A continuous game $G_c = (N_c, \mathcal{S}, v)$ and homeomorphisms $\varphi_i: S_{i} \to \Delta^{k_i}$
    \Ensure An MONFG $G_m = (N_m, \mathcal{A}, \bm{p})$ and utility functions $u$
    \State $N_m \gets N_c$
    \State $\mathcal{A} \gets [k_1 + 1] \times \dots \times [k_n + 1]$
    \State $\bm{p} \gets $ \textproc{IdentityPayoffs}$(N_m, \mathcal{A})$
    \State $G_m \gets (N_m, \mathcal{A}, \bm{p})$
    \State $u \gets (v_{1} \circ \varphi^{-1}_1, \dots, v_{n} \circ \varphi^{-1}_n)$
    \State \Return $G_m, u$
    \end{algorithmic}
\end{algorithm}

Note that $[k_i + 1] =\{1,\dots,k_i + 1\}$ where $k_i$ is the number of vertices in the simplex homeomorphic to player $i$'s strategy set. The function \textproc{IdentityPayoffs} returns the payoffs of the identity game for a given player base and joint action set.

To conclude this section, we provide a proof for \cref{re:non-uniqueness}. Informally, we noted that one can construct MONFGs with different payoff and utility functions which are all pure strategy equivalent to a given continuous game and vice versa. While the games still share the same structure, having different payoff and utility functions may ensure that some games are computationally simpler to solve than others. Identifying games that are strategically equivalent to efficiently solvable games has been studied with success in other settings as well \cite{heyman2019computation}.

\begin{remark1}
A continuous game may have multiple pure strategy equivalent multi-objective normal-form games and vice versa.
\end{remark1}

\begin{figure}[h]
    \centering
    \begin{game}{2}{2}
                  & $A$        & $B$\\
        $A$   & $(0, 0); (0, 0)$       & $(0, 3); (0, 3)$ \\
        $B$   & $(3, 0); (3, 0)$       & $(3, 3); (3, 3)$
    \end{game}
    \caption{The MONFG used in the proof of \cref{re:non-uniqueness}.}
    \label{fig:pse-monfg}
\end{figure}

\begin{proof}
Let $G_{c}$ be a continuous game with $S_{1} = S_{2} = [0, 3]$ and utility functions $v_{1}(s_1, s_2) = v_{2}(s_1, s_2) = s_1 + s_2$. Observe that the strategy spaces are already simplices, but are not probability simplices. To construct an MONFG, we can use \cref{alg:monfg-construction} and strategy bijection $\varphi_i(s) = \frac{s}{3}$ for both players. The resulting pure strategy equivalent MONFG, called $G_m$, has the same payoffs as shown in \cref{fig:identity-game-2-2} and utility functions $u_i(p_1, p_2, p_3, p_4) = 3 \cdot p_2 + 3 \cdot p_4$. Consider now a second MONFG, $G'_m$, shown in \cref{fig:pse-monfg} with the utility function $u_i(p1, p2) = p1 + p2$ for both players. It is clear that this game is also pure strategy equivalent to $G_c$, as it directly represents the simplex strategy spaces. As such, $G_c$ is pure strategy equivalent to both $G_m$ and $G'_m$.

For the opposite direction, we can apply \cref{alg:cg-construction} to the game shown in \cref{fig:pse-monfg} to obtain a continuous game $G'_c$ with strategy spaces $S_{i} = [0, 1]$ and utility functions $v_i = u_i \circ \bm{p}_i$ for both players. As such, $G'_m$ is pure strategy equivalent to both $G_c$ and $G'_c$.
\end{proof}

\subsection{Constructing the Strategy Bijections}
\label{ap:constructing-homeomorphisms}
To construct an MONFG from a continuous game using \cref{alg:monfg-construction}, it is necessary to provide a strategy bijection $\varphi_i: S_i \to \Delta^{k_i}$ for every player $i$. We provide a straightforward, albeit likely inefficient, approach for obtaining such functions. Afterwards, we present a short discussion on other techniques that may be better suited for this task.

\subsubsection{Standard Technique}
When no obvious homeomorphism is available, it is possible to first construct a map from any player $i$'s strategy space $S_i$ to the closed unit ball $B \subset \mathbb{R}^{k_i}$. We can subsequently create a homeomorphism from $B \subset \mathbb{R}^{k_i}$ to a probability simplex $\Delta^{k_1}$. By composing the two, we obtain a homeomorphism from $S_i$ to the probability simplex $\Delta^{k_1}$. 

Let $C$ be a compact convex subset in $\mathbb{R}^d$ with nonempty interior and define $\partial C$ as its boundary. Let $f: \partial C \to S^{d-1}$ be defined by
\begin{equation}
f(x) = \frac{x}{\|x\|}.
\end{equation}
This maps the boundary points of $C$ to the $(d-1)$-sphere. Intuitively, for a 2-dimensional convex compact set $C$, $f$ maps the boundary $\partial C$ to a circle. The map $f$ is a homeomorphism, so has an inverse $f^{-1}$, and the map
$k: B^d \to C$
defined by
\begin{equation}
\label{eq:homeomorphism}
k(x) = 
\begin{cases}
\|x\| f^{-1}\left(\frac{x}{\|x\|}\right) & \quad x \neq 0, \\
0 & \quad x = 0,
\end{cases}
\end{equation}
is also a homeomorphism (see e.g. \cite{bredon1993general}). Note that the construction assumes the origin is in the interior of $C$, which is always possible to accomplish by translation.

To complete the full construction, we also specify the inverse functions $f^{-1}: S^{d-1} \to \partial C$ and $k^{-1}: C \to B^d$. First, $x=f^{-1}(y)$ can be obtained by noticing that $x$ is the place where the ray through $y$ intersects $\partial C$. Let $p_{A}(y) := \inf\{\lambda > 0: y \in \lambda A\}$ be a Minkowski functional. Informally, a Minkowski functional $p_A$ returns for an input point $y$ the smallest positive number by which it is possible to scale $A$ such that $y$ is in the resulting space. We define
\begin{equation}
\label{eq:inverse-f}
f^{-1}(y) = \frac{y}{p_{\partial C}(y)}.
\end{equation}
Finally, $k^{-1}:C\to B^d$ can be constructed by first computing where the ray through $y$ intersects $\partial C$ and rescaling:
\begin{equation}
\label{eq:inverse-k}
k^{-1}(y) = 
\begin{cases}
    f\left(\frac{y}{\|y\|}\right)\frac{1}{\|y\|} & \quad y \neq 0, \\
    0 & \quad y = 0,
\end{cases}.
\end{equation}

As stated earlier, we may use these functions to go from any compact convex subset of an Euclidean space $C$ to the closed unit ball and we can apply the same procedure to map from the probability simplex to the unit ball. By composing the two, a full homeomorphism is obtained between $C$ and the probability simplex.


\subsubsection{Discussion}
While the proposed approach is straightforward to explain, it may be difficult to implement in practice. This is because computing the Minkowski functional $p_{\partial C}$ requires searching over a continuous range. One possible solution is to utilize a binary search algorithm that locates a $\lambda$ with a desired precision. However, as this approach may be computationally expensive, we suggest exploring more efficient techniques whenever possible. Another option is to cache values of $\lambda$ and reuse them when feasible to avoid the need for repeated binary searches.

When leveraging pure strategy equivalence to solve continuous games, it may be beneficial to employ algorithms that necessitate only a limited number of function evaluations to avoid costly computations. Finally, rather than employing an exact solution for the strategy bijections, it may be useful to \emph{learn} such bijections. For future work, we aim to study this in more detail.

\section{Mixed Strategy Equivalence}
\label{ap:mixed-strategy-equivalence}
In this section, we first provide a proof for \cref{th:cg-mse-monfg}. We reiterate the theorem here for completeness.

\begin{theorem4}
If a continuous game is pure strategy equivalent to a multi-objective normal-form game, they are also mixed strategy equivalent.
\end{theorem4}


\begin{proof}
Let $G_c = (N_c, \mathcal{S}, v)$ be a continuous game and $G_m = (N_m, \mathcal{A}, \bm{p})$ be a pure strategy equivalent multi-objective normal-form game with utility functions $u$. For notational simplicity, we assume the player bijection $\pi: N_c \to N_m$ to be implicitly applied in any mapping between $G_c$ and $G_m$ and refer to the set of players simply as $N$.

Let $i\in N$. The set $\mathcal{B}(S_{i})$ is the set of Borel probability measures on the strategy set of player $i$ and thus represents their mixed strategies; similarly, $\mathcal{B}(\Delta^{k_i})$ denotes the Borel probability measures on $\Delta^{k_i}$, with $k_i$ from the definition of pure strategy equivalence.
As $G_c$ and $G_m$ are pure strategy equivalent, there exists a continuous bijective function $\varphi_i$ with a continuous inverse mapping pure strategies from $G_c$ to mixed strategies in $G_m$ for each player $i$. 
By \cref{lemma:pushforward-bijective}, the function
\begin{equation*}
\psi_i: \mathcal{B}(S_{i}) \to \mathcal{B}(\Delta^{k_i})
\end{equation*}
with $\psi_i(\mu_{i}) = \varphi_{i*}(\mu_{i})$ is a bijection. Let $\psi = \psi_1 \times \dots \times \psi_n$, which is then also a bijection.

It remains to show that the utility from a mixed strategy $\mu$ in $G_c$ equals that of its corresponding hierarchical strategy $\psi(\mu)$ in $G_m$.
\begin{align*}
    \mathbb{E} v_{i}(\mu) & = \int_{S_1 \times \cdots \times S_n} v_{i}(s_1, \cdots , s_n) d\mu_{1}(s_1) \cdots d\mu_{n}(s_n) \\
    & = \int_{S_1 \times \cdots \times S_n} u_{i}\left(\mathbb{E} \bm{p}_i \left(\varphi_i\left(s_1, \cdots , s_n\right)\right)\right) d\mu_{1}(s_1) \cdots d\mu_{n}(s_n) \\
    & = \int_{\Delta^{k_1} \times \cdots \times \Delta^{k_n}} u_{i}\left(\mathbb{E} \bm{p}_i \left(\delta_1, \cdots , \delta_n\right)\right) d\psi_1(\mu_1)(\delta_1) \cdots d\psi_n(\mu_n)(\delta_n) \\
    & = \mathbb{E} u_{i}\left(\psi\left(\mu\right)\right). \qedhere
\end{align*}
\end{proof}

\section{Mapping of Nash Equilibria}
\label{ap:mapping-ne}
A major advantage of pure strategy equivalence is that game dynamics remain intact. We formalised this in \cref{th:ps-ms-ne-equivalence} stating that a pure strategy is an NE in a continuous game if and only if it is a mixed strategy NE in a pure strategy equivalent MONFG. Below, we show this formally.

\begin{theorem5}
A pure strategy is a Nash equilibrium in a continuous game if and only if it is a mixed strategy Nash equilibrium in a pure strategy equivalent multi-objective normal-form game.
\end{theorem5}

\begin{proof}
Let $G_c = (N_c, \mathcal{S}, v)$ be a continuous game and $G_m = (N_m, \mathcal{A}, \bm{p})$ be a pure strategy equivalent multi-objective normal-form game with utility functions $u$. For notational simplicity we assume the player bijection $\pi: N_c \to N_m$ to be implicitly applied in any mapping between $G_c$ and $G_m$ and refer to the set of players simply as $N$.

Assume first that $s^\ast$ is a pure strategy Nash equilibrium in $G_c$. Let $\varphi_i$ map a pure strategy for player $i$ in $G_c$ to their mixed strategy in $G_m$. From \cref{def:ps-equivalence} we know that pure strategy equivalence between $G_c$ and $G_m$ ensures that
\begin{equation}
\forall s \in \mathcal{S}: v_{i}(s) = u_{i}\left(\bm{p}_{i} \left(\varphi \left(s\right)\right)\right).
\end{equation}
Furthermore, because $s^\ast$ is a Nash equilibrium, we know that,
\begin{equation}
\forall i \in N, \forall s_i \in S_i: v_{i}\left(s^\ast_i, s^\ast_{-i}\right) \geq v_{i}\left(s_i, s^\ast_{-i}\right).
\end{equation}
Putting the pieces together we get $\forall i \in N, \forall s_i \in S_i$,
\begin{align}
& v_i(s^\ast_i, s^\ast_{-i}) \geq v_i(s_i, s^\ast_{-i}) \\
\implies & u_{i}\left(\bm{p}_{i} \left(\varphi\left(s^\ast_i, s^\ast_{-i}\right)\right)\right) \geq u_{i}\left(\bm{p}_{i} \left(\varphi\left(s_i, s^\ast_{-i}\right)\right)\right)
\end{align}
As such, the image of $s^\ast$ is a Nash equilibrium in $G_m$. 

Assume now that $\delta^\ast$ is a mixed strategy Nash equilibrium in $G_m$. Analogous to before, we know that there exists a bijective function $\varphi_i$ which maps pure strategies from $G_c$ to mixed strategies in $G_m$ and that
\begin{equation}
\forall \delta \in \Delta: u_{i}\left(\bm{p}_{i} \left(\delta\right)\right) = v_{i}\left(\varphi^{-1}\left(\delta\right)\right).
\end{equation}
As $\delta^\ast$ is a Nash equilibrium, we know that,
\begin{equation}
\forall i \in N, \forall \delta_i \in \Delta^{k_i}: u_{i}\left(\bm{p}_{i} \left(\delta^\ast_i, \delta^\ast_{-i}\right)\right) \geq u_{i}\left( \bm{p}_{i} \left(\delta_i, \delta^\ast_{-i}\right)\right).
\end{equation}
Putting the pieces together again it follows that $\forall i \in N, \forall \delta_i \in \Delta^{k_i}$, 
\begin{align}
& u_{i}\left(\bm{p}_{i} \left(\delta^\ast_i, \delta^\ast_{-i}\right)\right) \geq u_{i}\left(\bm{p}_{i} \left(\delta_i, \delta^\ast_{-i}\right)\right) \\
\implies & v_{i}\left(\varphi^{-1} \left(\delta^\ast_i, \delta^\ast_{-i}\right)\right) \geq v_{i}\left(\varphi^{-1} \left(\delta_i, \delta^\ast_{-i} \right) \right).
\end{align}
Note that the last inequality states that no player in $G_c$ has a pure strategy deviation that will increase their utility. However, due to the linearity of expectation this implies that no mixed strategy deviation can increase their utility either. As such, the image of $\delta^\ast$ is a pure strategy Nash equilibrium in $G_c$.
\end{proof}

Finally, we provide a formal proof for \cref{th:ms-hs-ne-equivalence}. This theorem states that the prior result extends to mixed strategy NE in continuous games and hierarchical NE in pure strategy equivalent MONFGs.

\begin{theorem6}
A mixed strategy is a Nash equilibrium in a continuous game if and only if it is a hierarchical Nash equilibrium in a pure strategy equivalent multi-objective normal-form game.
\end{theorem6}

\begin{proof}
Let $G_c = (N_c, \mathcal{S}, v)$ be a continuous game and $G_m = (N_m, \mathcal{A}, \bm{p})$ be a pure strategy equivalent multi-objective normal-form game with utility functions $u$. For notational simplicity we assume the player bijection $\pi: N_c \to N_m$ to be implicitly applied in any mapping between $G_c$ and $G_m$ and refer to the set of players simply as $N$.

Assume first that $\mu^\ast$ is a Nash equilibrium in $G_c$. From \cref{th:cg-mse-monfg}, we know that there exists a function $\psi_i$ for each player $i$ that bijectively maps a mixed strategy in $G_c$ to a hierarchical strategy in $G_m$. This ensures that,
\begin{equation}
\forall \mu \in \mathcal{B} \left(\mathcal{S} \right): v_{i}\left(\mu\right) = u_{i} \left(\psi\left(\mu\right)\right).
\end{equation}
Furthermore, because $\mu^\ast$ is a mixed strategy Nash equilibrium,
\begin{equation}
\forall i \in N, \forall \mu_i \in \mathcal{B}\left(S_i\right): v_{i}\left(\mu^\ast_i, \mu^\ast_{-i}\right) \geq v_{i}\left(\mu_i, \mu^\ast_{-i}\right).
\end{equation}
Combining the two we get $\forall i \in N, \forall \mu_i \in \mathcal{B}(S_i)$ 
\begin{align}
& v_{i} \left(\mu^\ast_i, \mu^\ast_{-i}\right) \geq v_{i} \left(\mu_i, \mu^\ast_{-i}\right) \\
\implies & u_{i} \left(\psi \left(\mu^\ast_i, \mu^\ast_{-i} \right) \right) \geq u_{i} \left(\psi \left(\mu_i, \mu^\ast_{-i} \right) \right)
\end{align}
As such, the image of $\mu^\ast$ is a hierarchical Nash equilibrium in $G_m$. 

Assume now that $\mu^\ast$ is a hierarchical Nash equilibrium in $G_m$. We know that,
\begin{equation}
\forall \mu \in \mathcal{B}(\Delta): u_{i}(\mu) = v_{i}(\psi^{-1}(\mu)).
\end{equation}
From the definition of a hierarchical Nash equilibrium, we can also state for $\mu^\ast$ that,
\begin{equation}
\forall i \in N, \forall \mu_i \in \mathcal{B}(\Delta^{k_i}): u_{i}\left(\mu^\ast_i, \mu^\ast_{-i}\right) \geq u_{i}\left(\mu_i, \mu^\ast_{-i} \right).
\end{equation}
Finally, we get $\forall i \in N, \forall \mu_i \in \mathcal{B}(\Delta^{k_i})$,
\begin{align}
& u_{i}\left(\mu^\ast_i, \mu^\ast_{-i} \right) \geq u_{i} \left(\mu_i, \mu^\ast_{-i} \right) \\
\implies & v_{i} \left(\psi^{-1} \left(\mu^\ast_i, \mu^\ast_{-i} \right) \right) \geq v_{i} \left(\psi^{-1} \left(\mu_i, \mu^\ast_{-i} \right)\right)
\end{align}
As such, the image of $\mu^\ast$ is a Nash equilibrium in $G_c$.
\end{proof}


\end{document}